\newtheorem{theorem}{Theorem}[section]
\newtheorem{lemma}[theorem]{Lemma}
\newtheorem{property}[theorem]{Property}
\newtheorem{proposition}[theorem]{Proposition}
\newtheorem{corollary}[theorem]{Corollary}
\newtheorem{definition}[theorem]{Definition}
\newtheorem{remark}[theorem]{Remark}
\newcommand{\mP}{\mathcal{P}}
\newcommand{\rs}{\sigma^{-1}}
\newcommand{\LexDFS}{LDFS\xspace}
\newcommand{\LexBFS}{LBFS\xspace}
\newcommand{\PN}{P(\mathbb{N}^+)}
\newcommand{\PNF}{P_f(\mathbb{N}^+)}
\newcommand{\Ni}{\mathbb{N}^+_i}
\newcommand{\siginv}{\sigma^{-1}}
\newcommand{\sigp}{\gamma}
\newcommand{\siginvp}{\sigp^{-1}}
\newcommand{\pregtls}{\prec_{TBLS}}
\newcommand{\pregls}{\prec_{GLS}}
\newcommand{\pregen}{\prec_{gen}}
\newcommand{\prebfs}{\prec_{BFS}}
\newcommand{\predfs}{\prec_{DFS}}
\newcommand{\premcs}{\prec_{MCS}}
\newcommand{\premns}{\prec_{MNS}}
\newcommand{\prelbfs}{\prec_{\LexBFS}}
\newcommand{\preldfs}{\prec_{\LexDFS}}
\title{ A tie-break model for graph search}
\author{Derek G. Corneil \thanks{Dept. of Computer Science, University of Toronto}
\and 
J\'er\'emie Dusart  \thanks{LIAFA, UMR  7089 CNRS  \& Universit\'e Paris Diderot, F-75205 Paris Cedex 13, France}
\and
Michel Habib$^{~\dag}$
\and
Antoine Mamcarz$^{~\dag}$
\and Fabien de Montgolfier$^{~\dag}$}
\begin{document}

\maketitle

\begin{abstract}
In this paper, we consider the problem of the recognition of various kinds of orderings produced by graph searches. To this aim, we introduce a new framework, the  Tie-Breaking Label Search (TBLS), in order to handle a broad variety of searches. This new model is based on partial orders defined on the label set and it unifies the General Label Search (GLS) formalism of Krueger, Simonet and Berry (2011), and the ``pattern-conditions'' formalism of Corneil and Krueger (2008). It allows us to derive some general properties including new pattern-conditions (yielding memory-efficient certificates) for many usual searches, including BFS, DFS, LBFS and LDFS.  Furthermore, the new model allows easy expression of multi-sweep uses of searches that depend on previous (search) orderings of the graph's vertex set.
\end{abstract}

\textbf{Keywords:}
Graph search model, tie-break mechanisms, BFS, DFS, LBFS, LDFS, multi-sweep algorithms.

\section{Introduction}\label{sec:intro} 

A graph search is a mechanism for systematically visiting the vertices 
of a graph. It has been a fundamental technique in the design of graph algorithms since the early days of computer science.  Many of the early search methods were based on Breadth First Search (BFS) or Depth First Search (DFS) and resulted in efficient algorithms for practical problems such as the distance between two vertices, diameter, connectivity, network flows and the recognition of planar graphs see \cite{CormenLR89}. 

Many variants of these searches have been introduced since, providing elegant and simple solutions to many problems.  For example, Lexicographic Breadth First Search (LBFS) ~\cite{RTL}, and its generalization Maximal Neighbourhood Search (MNS)  \cite{Shier}  were shown to yield simple linear time algorithms for chordal graph recognition.  More recently, Lexicographic Depth First Search (\LexDFS), was introduced in \cite{CK} based on its symmetrical ``pattern-condition'' with \LexBFS.  A few years after its discovery it was shown that LDFS when applied to cocomparability graphs yields simple and efficient algorithms for solving various Hamiltonian Path related problems \cite{CDH, MC, CHK}. 

Some recent applications of graph searches involve a controlled tie-break mechanism in a series of consecutive graph searches, see \cite{Corneil04, DOS09, CDH, DH13}. Examples include the  strongly connected components computation using double-DFS \cite{S81}
 and the series of an arbitrary LBFS followed by two LBFS$^+$s used to recognize unit interval graphs \cite{Corneil04}.
Note that a ``$^+$ search'' breaks ties by choosing (amongst the tied vertices) the vertex that is rightmost with respect to a given ordering
of the vertices.
This motivates a general study of these graph searches equipped with a tie-break mechanism that incorporates such multi-sweep usage of graph searches. This is the goal of the present paper: to define the simplest  framework powerful enough to capture many graph searches either used individually or in a multi-sweep fashion and simple enough to allow general theorems on graph searches.  Building on the General Label Search (GLS) framework from  \cite{BGS11} we not only simplify their model but also unify their model with the ``pattern-conditions'' formalism of \cite{CK}.

This paper is organised as follows. After basic notations and definitions in Section 2, Section~\ref{sectTBLS} introduces the Tie-Breaking Label Search (TBLS) formalism to address graph searches. We then illustrate the TBLS by expressing some classical graph searches in this formalism.  We will also show the relationship between our formalism and the ``pattern-conditions'' of search orderings introduced in  \cite{CK} thereby yielding some new pattern-conditions for various classical searches. In  Section~\ref{compa} we show that the TBLS and GLS models capture the same set of graph searches. We then  propose a unified method for recognizing whether a given ordering of the vertices could have been produced by a specific graph search.  Finally, in Section~\ref{Imp} we present a TBLS implementation framework in a particular case.

\section{Preliminaries and notation}

In this paper, $G=(V,E)$ always (and sometimes implicitly) denotes a graph with $n$ vertices and $m$ edges. All graphs considered here are supposed to be finite.
We identify the vertex-set with $\{1,...,n\}$, allowing us to see a total ordering on $V$ as a permutation on $\{1,...,n\}$.

We define a \emph{graph search} to be an algorithm that visits all the vertices of a graph according to some rules, and a \emph{search ordering} to be the ordering $\sigma$ of the vertices yielded by such an algorithm. The link between these two notions is an overriding theme of this paper.  Vertex $\sigma(i)$ is the $i$th vertex of $\sigma$ and $\siginv(x)\in\{1,...,n\}$ is the position of vertex $x$ in $\sigma$.  A vertex $u$ is the \emph{leftmost} (respectively \emph{rightmost}) vertex with property $X$ in $\sigma$ if there is no vertex $v$ such that $X(v)$ and $v<_\sigma u$ (respectively $u<_\sigma v$).
 Our graphs are assumed to be undirected, but most searches (especially those captured by TBLS) may be performed on directed graphs without any modifications to the algorithm (if $xy$ is an arc then we say that $y$ is a neighbour of $x$ while $x$ is \emph{not} a neighbour of $y$).

The symmetric difference of two sets $A$ and $B$, namely $(A-B) \cup (B-A)$ is denoted by $A \bigtriangleup B$.
Furthermore, $\mathbb{N}^+$ represents the set of integers strictly greater than $0$ and $\mathbb{N}^+_p$ represents the set of integers strictly greater than $0$ and less than $p$. $\PN$ denotes the power-set of $\mathbb{N}^+$ and $\PNF$  denotes the set of all finite subsets of $\mathbb{N}^+$. By $\mathfrak{S}_n $ we denote the set of all permutations of  $\{1,...,n\}$.
For finite $A\in\PN$, let $umin(A)$ be: if  $A=\emptyset$ then $umin(A)=\infty$ else 
$umin(A)=min\{i \mid i\in A\}$; and let $umax(A)$ be: if  $A=\emptyset$ then $umax(A)=0$ else 
$umax(A)=max\{i \mid i\in A\}$.
We always use the notation $<$ for the usual strict (i.e., irreflexive) order between integers, and $\prec$ for a partial strict order between elements from $\PNF$ (or from another set when specified). 
Definitions of most of the searches we will consider appear in  \cite{CK} or  \cite{GOL}. 

\section{TBLS, a  Tie-Breaking Label Search}\label{sectTBLS}

A \emph{graph search} is an iterative process that chooses at each step a vertex of the graph and numbers it (from 1 to $n$). Each vertex is chosen (also said \emph{visited}) exactly once (even if the graph is disconnected).
Let us now define  a general  \emph{Tie-Breaking Label Search} (TBLS). It uses \emph{labels} to decide the next vertex to be visited; $label(v)$  is a subset of $\{1,...,n\}$. A TBLS is defined on:
\begin{enumerate}
\item A graph $G=(V,E)$ on which the search is performed;
\item A strict partial order $\prec$ over the label-set $\PNF$;
\item An ordering $\tau$ of the vertices of $V$ called the \emph{tie-break permutation}.
 \end{enumerate}
 
 The output of $\mbox{TBLS}(G,\prec,\tau)$ is a permutation $\sigma$ of $V$, called a  $TBLS-ordering$ or also the \emph{search ordering}  or \emph{visiting ordering}. Let us say a vertex $v$ is \emph{unnumbered} until $\sigma(i)\leftarrow v$ is performed, and then $i$ is its \emph{visiting date}. Thanks to the following algorithm, $label(v)$ is always the set of visiting dates of the neighbours of $v$ visited before $v$.  
More specifically $label_i(v)$ for a vertex $v$ denotes the label of $v$ at the beginning of step $i$.
 This formalism identifies a search with the orderings it may produce, as in \cite{CK}, while extending
 the formalism of General Label Search (GLS) of \cite{BGS11} by the introduction of a \emph{tie-break} ordering $\tau$, making the result of a search algorithm purely deterministic (no arbitrary decision is taken).  

\vspace{0.5cm}

{\small 
	\begin{algorithm}[ht]

\caption{TBLS($G,\prec,\tau$)\label{gtls}}

\lForEach{$v\in V$}{$label(v)\leftarrow \emptyset$}\
\For{$i\leftarrow 1$ \KwTo $n$}{
$Eligible\gets \{x \in V ~ | ~ x $ unnumbered and $\nexists$  unnumbered $y \in V \text{ such that } label(x) \prec label(y)\}$\;
    Let $v$ be the leftmost vertex of $Eligible$ according to the ordering $\tau$\;
	$\sigma(i)\leftarrow v$\;
	\ForEach{\textup{unnumbered  vertex $w$ adjacent to $v$}}{
		$label(w)\leftarrow label(w)\cup\{i\}$\;
		}
	}
\end{algorithm}
}

\vspace{0.5cm}

\subsection*{Remarks on this formalism: }

\begin{enumerate}
\item
Notice that during a TBLS search vertices are always labelled from $1$ up to $n$.  The original description of \LexBFS generated labels from $n$
down to $1$.  Since a label is always an unordered set rather than a string, as often seen with \LexBFS and \LexDFS, we avoid having
to prepend or append elements to existing labels.
It should also be noticed that the TBLS model does not require the graph to be connected, and therefore in the following we will extend classical graph searches to disconnected graphs.  Since we just need to specify $\prec$ to describe a particular search no implementation details 
have to be discussed in the specification of the search.  Finally, by requiring a tie-breaking permutation $\tau$ we  have a 
mechanism for choosing a specific vertex from $Eligible$.  Many existing recognition algorithms such as the unit interval recognition algorithm
in \cite{Corneil04} use a series of \LexBFS sweeps where ties are broken by choosing the rightmost eligible vertex in the previous \LexBFS
search; to accomplish this in the TBLS formalism $\tau$ is set to be the reverse of the previous \LexBFS ordering.
 
\item
Note that all elements of the set $Eligible$  have a label set which is maximal with respect to some finite partial order,
 since any finite partial order has at least one maximal element; therefore $Eligible$ cannot be empty.  
In the context of \LexBFS the $Eligible$ set  is  often called a \emph{slice}.
The reader should be aware that we make no claims on the complexity of computing 
the strict partial order $\prec$ over the label-set $\PNF$; unfortunately it could be NP-hard.
%


\end{enumerate}

Let us first present  an easy  characterization property of TBLS search that will be used throughout the paper and which is a direct translation of the algorithm.  First  we define $N_\sigma(u,v)$ to be
the set of visiting dates of neighbours of $u$ that occur before $v$ in $\sigma$;  formally $N_\sigma(u,v) = \{i \mid \sigma(i) \in N(u) $ and $ \sigma(i)<_{\sigma} v\}$.

\begin{property}\label{metaordering}
	Let $S$ be a TBLS search with partial order relation $\prec_S$.

	An ordering $\sigma$ of the vertices of a graph G is an S-ordering if and only if for every $x,~y \in V$, if $x <_\sigma y$, then $N_\sigma(x,x) \not\prec_S N_\sigma(y,x)$.
\end{property}

\begin{proof}
The forward direction follows directly from the definition.

For the backward direction, assume that for every $x,~y \in V$, $x <_\sigma y$, $N_\sigma(x,x) \not\prec_S N_\sigma(y,x)$ but $\sigma$ is not an $S$-ordering.  Let $\sigp=\mbox{TBLS}(G,\prec_S,\sigma)$. Since $\sigma$ is not an $S$-ordering, we know that $\sigma \neq \sigp$. Now let $i$ be the first index such that $\siginv(i) \neq \siginvp(i)$. Let $x=\siginv(i)$ and $y=\siginvp(i)$. Since $x <_\sigma y$ but $\mbox{TBLS}(G,\prec_S,\sigma)$ did not choose $x$, we know that $x$ did not have a maximal label at step $i$. Therefore there must exist $z$ such that $x <_\sigma z$, and $label_i(x) \prec_S label_i(z)$. But since $label_i(x)=N_\sigma(x,x)$ and $label_i(z)=N_\sigma(z,x)$ we now have a pair of vertices $x$, $z$ that contradicts the assumption that for all $x,~y \in V$, $x <_\sigma y$, $N_\sigma(x,x) \not\prec_S N_\sigma(y,x)$.

\end{proof}

With this formalism, in order to specify a particular search we just need to specify $\prec$, 
the partial order relation on the label sets for that search.  
As a consequence we can transmit relationships between partial orders to their associated graph searches.

There are two natural ways of saying that a search $S$ \textbf{is a} search $S'$ (for instance, that \LexBFS \textbf{is a} BFS): either the $\prec$ ordering used by $S$ is a refinement of that of $S'$; 
or any search ordering $\sigma$ output by an execution of $S$ could also have been output by an execution of $S'$. 
In fact it can easily be shown that both formulations are equivalent, as stated in Theorem \ref{metaexten}.

\begin{definition}
 	For two TBLS searches $S$, $S'$, we say that $S'$ is an extension of $S$ (denoted by $S \ll S'$) if and only if every $S'$-ordering $\sigma$ also is an $S$-ordering.
\end{definition}

 The statement and proof of the next lemma follows the
work of \cite{BGS11} where there are similar results for the GLS formalism.

\begin{lemma}[see \cite{BGS11}]\label{K}
	For any $TBLS$ S, any integer $p\geq 1$ and any sets $A$ and $B$ of $P(\mathbb{N}^+_p)$, if $A \not\prec_S B$ then 
there exists a  graph $G$ and an $S$-ordering $\sigma$ such that in the $(p-1)$st step the label of the $(p-1)$st vertex is $A$ and the
label of the $p$th vertex  is $B$ (i.e., 
$label_{p-1}(\siginv(p-1))=A$ and $label_{p-1}(\siginv(p))=B$).
\end{lemma}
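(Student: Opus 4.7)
The plan is to give a direct construction of $G$ and a tie-break $\tau$ making the TBLS output $\sigma$ realise the prescribed labels, and then verify correctness via Property~\ref{metaordering}. Take $V = \{c_1, c_2, \ldots, c_{p-2}, x, y\}$, place edges $(c_i, x)$ for every $i \in A$ and $(c_i, y)$ for every $i \in B$, and leave $x$ and $y$ non-adjacent (we work here under the natural reading $A, B \subseteq \{1, \ldots, p-2\}$, since these are the only labels compatible with the statement at the beginning of step $p-1$). The critical design choice concerns the edges among the $c_i$'s: for each $i \in \{1, \ldots, p-2\}$ fix a set $M_i \subseteq \{1, \ldots, i-1\}$ that is $\prec_S$-maximal in the finite power set $P(\{1, \ldots, i-1\})$ -- such a maximum exists because $\prec_S$ is a strict partial order on a finite set -- and declare $c_i$ adjacent to $c_j$ (for $j < i$) exactly when $j \in M_i$. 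Finally set $\tau = (c_1, \ldots, c_{p-2}, x, y)$.

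I would verify that $\sigma := \mbox{TBLS}(G, \prec_S, \tau) = \tau$ using Property~\ref{metaordering}, i.e.\ by checking $N_\sigma(u, u) \not\prec_S N_\sigma(v, u)$ for every $u <_\sigma v$. When $u = c_i$ and $v \in \{c_j : j > i\} \cup \{x, y\}$, one has $N_\sigma(c_i, c_i) = M_i$ and $N_\sigma(v, c_i) \in P(\{1, \ldots, i-1\})$, so the required non-relation follows from the $\prec_S$-maximality of $M_i$. For the single remaining pair $u = x = \sigma(p-1)$, $v = y = \sigma(p)$, I compute $N_\sigma(x, x) = A$ (since the visited neighbours of $x$ before step $p-1$ are exactly the $c_i$ with $i \in A$) and $N_\sigma(y, x) = B$, so the required non-relation is exactly the hypothesis $A \not\prec_S B$. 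Hence $\sigma = \tau$ is an $S$-ordering in which the $(p-1)$st vertex has label $A$ and the $p$th vertex has label $B$ at step $p-1$, as claimed.

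The main obstacle is that $\prec_S$ is an arbitrary strict partial order: no canonical monotonicity (with respect to $\subseteq$ or any other fixed order) can be assumed. The trick of picking each $M_i$ to be $\prec_S$-maximal in the finite poset $P(\{1, \ldots, i-1\})$ is precisely what allows every pair not involving both $x$ and $y$ to be dispatched by a single uniform maximality argument, so that the hypothesis $A \not\prec_S B$ is invoked exactly once, for the final pair $(x, y)$. A minor bookkeeping check is that placing the auxiliary edges only among the $c_i$'s does not perturb the labels of $x$ or $y$, which is immediate from our construction.
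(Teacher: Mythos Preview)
Your proof is correct and takes a genuinely different route from the paper's.

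Both constructions put $p$ vertices in a row, wire the last two to realise left-neighbourhoods $A$ and $B$, and verify the ordering via Property~\ref{metaordering}; the difference lies in how the edges among the first $p-2$ vertices are chosen. The paper sets the left-neighbourhood of $z_i$ (for $i\le p-2$) to be whichever of $A\cap\{1,\dots,i-1\}$ and $B\cap\{1,\dots,i-1\}$ is not $\prec_S$-dominated by the other; consequently every label that ever appears in the verification is a truncation of $A$ or of $B$, and each check reduces to comparing $A\cap\Ni$ with $B\cap\Ni$. Your construction instead picks, for each $i$, an arbitrary $\prec_S$-maximal $M_i\in P(\{1,\dots,i-1\})$ and uses that as the left-neighbourhood of $c_i$; maximality then dispatches every pair with $u=c_i$ in one stroke, and the hypothesis $A\not\prec_S B$ is invoked exactly once, for the final pair $(x,y)$. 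Your argument is arguably cleaner in that it isolates the role of the hypothesis; the paper's has the minor advantage of being fully canonical (no choice of $M_i$) and of keeping all labels within the two-element set $\{A\cap\Ni,\,B\cap\Ni\}$ at each step. Your remark that the statement only makes sense for $A,B\subseteq\{1,\dots,p-2\}$ is also well taken.
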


\begin{proof}
	Let $G=(V,E)$, with $V=\{ z_1,...,z_p\}$ and $E = \{z_iz_k \mid 1 \leq k <i \leq p-2$ and if $A \cap \Ni \prec_S B \cap \Ni$ then $k \in B$ else $k \in A\} \cup \{ z_{p-1}z_k | k \in A\} \cup  \{ z_{p}z_k | k \in B\}$. Let $\sigma = (z_1,\dots,z_p)$. By the definitions of $E$ and $\sigma$, for any integers $i,j$ such that $1 \leq i \leq j \leq p$, $N_\sigma(\siginv(j),\siginv(i))=A \cap \Ni$ or $N_\sigma(\siginv(j),\siginv(i)=B \cap \Ni$ with $N_\sigma(\siginv(i), \siginv(i)) \not\prec_S N_\sigma(\siginv(j), \siginv(i))$. By Property \ref{metaordering}, $\sigma$ is an S ordering and we have  $N_\sigma(\siginv(p-1),\siginv(p-1))=A$ and $N_\sigma(\siginv(p),\siginv(p-1))=B$.
\end{proof}

\begin{definition} For two partial orders $\prec_P, \prec_Q$ on the same ground set $X$, we say that $\prec_P$ is an extension of  $\prec_Q$ if  $\forall x,y \in X$, $x\prec_Q y$ implies   $x\prec_P y$. \end{definition}

\begin{theorem}\label{metaexten}
	Let $S$, $S'$ be two TBLS. $S'$ is an extension of $S$ if and only if $\prec_{S'}$ is an extension of $\prec_S$.   
\end{theorem}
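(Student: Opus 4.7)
The plan is to prove both directions using Property~\ref{metaordering} as the main bridge between the combinatorial extension statement and the order-theoretic extension statement, with Lemma~\ref{K} providing the witness graphs needed in the forward direction.

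For the backward direction, which I would do first as it is routine, I assume $\prec_{S'}$ extends $\prec_S$ and let $\sigma$ be any $S'$-ordering. By Property~\ref{metaordering} applied to $S'$, for all $x <_\sigma y$ we have $N_\sigma(x,x) \not\prec_{S'} N_\sigma(y,x)$. Since $\prec_{S'}$ extends $\prec_S$, any pair related by $\prec_S$ is also related by $\prec_{S'}$; contrapositively, $N_\sigma(x,x) \not\prec_{S'} N_\sigma(y,x)$ implies $N_\sigma(x,x) \not\prec_S N_\sigma(y,x)$. Applying the converse direction of Property~\ref{metaordering} for $S$ yields that $\sigma$ is an $S$-ordering, so $S \ll S'$.

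For the forward direction I would argue by contrapositive: assume $\prec_{S'}$ does not extend $\prec_S$, so there exist $A, B \in \PNF$ with $A \prec_S B$ but $A \not\prec_{S'} B$. Choose $p$ large enough that $A, B \subseteq \mathbb{N}^+_p$. Since $A \not\prec_{S'} B$, Lemma~\ref{K} applied to $S'$ produces a graph $G$ and an $S'$-ordering $\sigma$ with $label_{p-1}(\sigma^{-1}(p-1)) = A$ and $label_{p-1}(\sigma^{-1}(p)) = B$, i.e., $N_\sigma(\sigma(p-1),\sigma(p-1)) = A$ and $N_\sigma(\sigma(p),\sigma(p-1)) = B$. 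But then $\sigma$ cannot be an $S$-ordering: taking $x = \sigma(p-1)$ and $y = \sigma(p)$ in Property~\ref{metaordering} for $S$ would require $A \not\prec_S B$, contradicting our assumption. Hence $\sigma$ is an $S'$-ordering that is not an $S$-ordering, so $S'$ does not extend $S$.

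The main obstacle is the forward direction: one needs to turn an abstract non-relation $A \not\prec_{S'} B$ into a concrete $S'$-ordering on some graph, which is exactly what Lemma~\ref{K} provides. Once that witness is in hand, the rest is a direct application of Property~\ref{metaordering} twice. No additional construction beyond what is already given is required, and there are no subtleties about the tie-break permutation $\tau$, since the S/S'-ordering characterization in Property~\ref{metaordering} does not depend on $\tau$.
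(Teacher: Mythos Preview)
Your proof is correct and follows essentially the same approach as the paper: both directions hinge on Property~\ref{metaordering}, and the forward direction uses Lemma~\ref{K} to manufacture a witnessing $S'$-ordering from a pair $A,B$ with $A\prec_S B$ but $A\not\prec_{S'}B$. The only cosmetic differences are that you frame the forward direction as a contrapositive rather than a contradiction and that you make the choice of $p$ with $A,B\in P(\mathbb{N}^+_p)$ explicit.
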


\begin{proof}

		For the forward direction, assume for contradiction that $S'$ is an extension of $S$ but $\prec_{S'}$ is not an extension of $\prec_{S}$. Therefore there exists $A$, $B$ such that $A \prec_S B$ and $A \not\prec_{S'} B$. Now using Lemma \ref{K} there exists a graph $G$ and an $S'$-ordering $\sigma$ such that $label_{p-1}(\siginv(p-1))=A$ and $label_{p-1}(\siginv(p))=B$. Since $A \prec_S B$ using Property \ref{metaordering}, we deduce that $\sigma$ is not an $S$-ordering which contradicts the fact that $S'$ is an extension of $S$.

	Suppose that $\sigma$ is an $S'$-ordering. Therefore using Property \ref{metaordering} we know that for every $x,~y \in V$, $x <_\sigma y$, we have $N_\sigma(x,x) \not\prec_{S'} N_\sigma(y,x)$. Since $\prec_{S'}$ is an extension of $\prec_S$, we deduce that $x,~y \in V$,  such that for every $x <_\sigma y$, we have $N_\sigma(x,x) \not\prec_{S} N_\sigma(y,x)$. Now using Property \ref{metaordering}, we deduce that $\sigma$ is an $S$-ordering.

\end{proof}

The choice of permutation $\tau$ is useful in some situations described below; otherwise, we consider the orderings output by an arbitrary choice of $\tau$ thanks to the following definition:

\begin{definition}\label{defordre}
Let $\prec$ be some ordering over $\PNF$.
Then $\sigma$ is a TBLS ordering for $G$ and $\prec$ if there exists $\tau$ such that $\sigma=\mbox{TBLS}(G,\prec,\tau)$.
\end{definition}

   Before giving some examples
of appropriate $\prec$ for well known searches, let us start with a kind of fixed point theorem and some general features of the TBLS formalism.

\begin{theorem}\label{thm:fixpoint}
Let $G$ be a graph; $\prec$ a search rule; and $\sigma$ an ordering of the vertices of $G$.
Then there exists $\tau$ such that $\sigma=\mbox{TBLS}(G,\prec,\tau)$
if and only if 
$\sigma=TBLS(G,\prec,\sigma)$.

\end{theorem}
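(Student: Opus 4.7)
My plan is to prove the two directions separately, with the nontrivial work all on the forward direction. The backward direction is immediate: if $\sigma = \mbox{TBLS}(G,\prec,\sigma)$, then $\tau := \sigma$ witnesses the existence claim.

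For the forward direction, suppose $\tau$ is such that $\sigma = \mbox{TBLS}(G,\prec,\tau)$. I would show by induction on $i \in \{1,\ldots,n\}$ that when we execute $\mbox{TBLS}(G,\prec,\sigma)$, the vertex chosen at step $i$ is exactly $\sigma(i)$. The key observation driving the induction is that the $Eligible$ set computed at step $i$ depends only on the current labels, which in turn depend only on the (common) list of vertices already numbered. Thus, once the two runs agree on steps $1,\ldots,i-1$, they agree on the set of unnumbered vertices and on $label_i$, and hence on $Eligible$ at step $i$. The two runs then differ only in how they break ties within $Eligible$.

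For the inductive step, since $\mbox{TBLS}(G,\prec,\tau)$ chose $\sigma(i)$ at step $i$, we know $\sigma(i) \in Eligible$. Now, in $\mbox{TBLS}(G,\prec,\sigma)$ we pick the leftmost vertex of $Eligible$ according to $\sigma$ itself. Any vertex $v$ with $v <_\sigma \sigma(i)$ satisfies $\sigma^{-1}(v) < i$, so by the inductive hypothesis $v$ has already been numbered and is therefore not unnumbered at step $i$; consequently $v \notin Eligible$. Hence $\sigma(i)$ is in fact the leftmost element of $Eligible$ with respect to $\sigma$, and the new run selects $\sigma(i)$, completing the induction. The base case $i=1$ is just the instance where no vertex has yet been numbered; the argument above applies verbatim.

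There is essentially no obstacle here: the statement is a fixed-point/self-consistency property, and the real content is just the trivial remark that after the first $i-1$ steps have been fixed, the only vertices that could ``beat'' $\sigma(i)$ in the $\sigma$-tie-break are already gone. No appeal to Property \ref{metaordering} or Lemma \ref{K} is needed, and the proof requires nothing about the structure of $\prec$ beyond what the algorithm itself uses.
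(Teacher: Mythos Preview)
Your proof is correct and follows essentially the same approach as the paper: both arguments observe that the Eligible set at step $i$ is determined by the first $i-1$ choices, that $\sigma(i)$ lies in it, and that no vertex earlier in $\sigma$ remains unnumbered, so the $\sigma$-tie-break must select $\sigma(i)$. The only cosmetic difference is that the paper phrases this as a proof by contradiction via the first index of disagreement, whereas you give a direct induction.
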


\begin{proof}
One direction is obvious. For the other direction, assume that $\sigma=\mbox{TBLS}(G, \prec, \tau)$ for some $\tau$, and consider $\sigma'=\mbox{TBLS}(G,\prec,\sigma)$. Assume, by contradiction, that $\sigma \neq \sigma'$, and consider $i$, the index of the first difference between $\sigma$ and $\sigma'$. Let $Eligible_i^{\sigma}$ be the set of eligible vertices at step $i$ of the algorithm that generated $\sigma$, and let $Eligible_i^{\sigma'}$ be the set of eligible vertices at step $i$ of the algorithm that generated $\sigma'$. 
Since $\sigma$ and $\sigma'$ are equal until index $i$, $Eligible_i^{\sigma}=Eligible_i^{\sigma'}$.
By the definition of TBLS, $\sigma(i)$ is the first vertex of $Eligible_i^{\sigma}$ according to $\tau$.  Finally, since the first vertex of this set, according to $\sigma$, is $\sigma(i)$, the tie-break rule chose it and so $\sigma(i)=\sigma'(i)$, a contradiction.
\end{proof}

This easy result (which is a direct consequence of our model equipped with a built-in tie-break process)
is in fact a very powerful theoretical tool to show that some ordering is not a TBLS ordering, and we will use it several times in the proofs in the next sections, as for example in Theorem \ref{genericsearch}.

As a first conclusion, the TBLS model is a pure mathematical  abstraction of graph search algorithms via partial orders but with no data structures involved. Moreover,
if we have a characterization of the total orderings produced by a given TBLS (as for example the usual search  characterizations of section 4) then we can get rid of the implementation itself which can be parallel or sequential. In the next sections we will exhibit some easy consequences of this model. But before, we 
demonstrate its expressive power, in particular  to deal with multi-sweep algorithms (i.e., algorithms written as a series of successive graph searches).

To this aim, let us consider the sequence $\{\sigma_i \}_{i \in \mathbb{N}}$ of total orderings of the vertices, satisfying the following recursive equations:

(i) $\sigma_0$ is an arbitrary total ordering of the vertices

(ii) $\sigma_i =\mbox{TBLS}(G,\prec,\sigma^r_{i-1})$ where $\sigma^r$ denotes the reverse ordering of $\sigma$.
\vspace{0,5cm}

It was proven when $\prec$ is the partial order associated to the LBFS search, as described in the next section:

(i) in \cite{Corneil04}, that if $G$ is a unit interval graph then $\sigma_{3}$ is a unit interval ordering\footnote{an ordering $\tau$ of $V$ such that for all $x <_{\tau} y <_{\tau} z$ with $xz \in E$, we have $xy, yz \in E$.}. 

(ii) in \cite{DH13} that if $G$ is a cocomparability graph then $\sigma_{|V|}$ is a cocomp ordering\footnote{an ordering $\tau$ of $V$ such that for all $x <_{\tau} y <_{\tau} z$ with $xz \in E$, we have at least one of $xy, yz \in E$.}.

\section{Characterizing classical searches using TBLS}\label{charac}

In this section we show how various classical searches (see \cite{CK} for the definitions of these various searches) may be expressed in the TBLS formalism.  In each case we will state an
appropriate $\prec$ order and where applicable, we will 
establish various characterizations of the search including the ``pattern-condition'' presented in \cite{CK}.  In many cases we will exhibit new
vertex ordering characterizations.


\begin{definition}\label{lnrndef}
For every vertex $x$, let $ln(x)$ be the leftmost (in $\sigma$) left neighbour of $x$, and let $rn(x)$ be the rightmost (in $\sigma$) right neighbour of $x$. In both cases, if $x$ has no left (respectively right) neighbour, then $ln(x)$ (respectively $rn(x)$)=$-1$.
\end{definition}


\subsection{Generic Search}
A \emph{Generic Search} as described by Tarjan \cite{TAR} is any search that wherever possible visits neighbours of already visited vertices (this corresponds to the usual notion of graph search).

We now give an alternative proof based on our formalism of the characterization of a generic search ordering, called a GEN-ordering throughout
the rest of the paper.


\begin{theorem}[see \cite{CK}]\label{genericsearch}
We define $A\pregen B$ if and only if $A=\emptyset$ and $B\ne \emptyset$ and let $\sigma$ be a permutation of $V$.
The following conditions are equivalent:\\[-.6cm]

\begin{enumerate}
	\item Vertex ordering $\sigma$ is a GEN-ordering of $V$ (i.e., a TBLS using $\pregen$). 
	\item For every triple of vertices $a,~b,~c$ such that $ a <_\sigma b <_\sigma c$, and $a \in N(c)-N(b)$ there exists $d \in N(b)$ such that $d <_\sigma b$.
     \item For every $x\in V$, for every $y\in V$ such that $x<_{\sigma}y <_{\sigma}rn(x)$, we have $ln(y) \neq -1$.
\end{enumerate}
\end{theorem}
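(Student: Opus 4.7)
The plan is to apply Property \ref{metaordering} to translate (1) directly into a condition on $N_\sigma$ that matches (2) after renaming, and then to establish (2) $\Leftrightarrow$ (3) by a short combinatorial rephrasing based on Definition \ref{lnrndef}. I do not expect a deep obstacle: the proof is essentially a mechanical unfolding of the partial order $\pregen$. The only subtleties worth flagging are that condition (3) quantifies only over triples of the form $(x,y,rn(x))$ yet is still strong enough to recover (2) for arbitrary $c$, and that the hypothesis $a\notin N(b)$ appearing in (2) can be freely added or dropped.

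For (1) $\Leftrightarrow$ (2), I would apply Property \ref{metaordering} with $\prec_S=\pregen$. Unfolding the definition that $A\pregen B$ iff $A=\emptyset$ and $B\neq\emptyset$, the property says that $\sigma$ is a GEN-ordering if and only if, for every pair $x<_\sigma y$, the implication $N_\sigma(x,x)=\emptyset \Rightarrow N_\sigma(y,x)=\emptyset$ holds. Contrapositively: whenever $y$ has some neighbour visited strictly before $x$, so does $x$. Renaming $x\to b$, $y\to c$ and naming the early neighbour of $y$ by $a$, this reads: for every $a<_\sigma b<_\sigma c$ with $a\in N(c)$ there exists $d\in N(b)$ with $d<_\sigma b$. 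This is exactly condition (2), once we observe that adding the hypothesis $a\notin N(b)$ is harmless: if $a\in N(b)$, then $d:=a$ already witnesses the conclusion.

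For (2) $\Rightarrow$ (3), suppose $x<_\sigma y<_\sigma rn(x)$ and set $(a,b,c)=(x,y,rn(x))$, so $a<_\sigma b<_\sigma c$ and $a\in N(c)$. If $x\in N(y)$ then $x$ itself certifies $ln(y)\neq -1$; otherwise $a\notin N(b)$ and (2) supplies $d\in N(y)$ with $d<_\sigma y$, again yielding $ln(y)\neq -1$. For (3) $\Rightarrow$ (2), let $a<_\sigma b<_\sigma c$ with $a\in N(c)-N(b)$. Since $c$ is a right neighbour of $a$, the definition of $rn$ gives $c\leq_\sigma rn(a)$ and hence $a<_\sigma b<_\sigma rn(a)$; condition (3) then delivers $ln(b)\neq -1$, which is precisely the required $d\in N(b)$ with $d<_\sigma b$. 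I would execute the argument in the order above because the substantive translation lies in (1) $\Leftrightarrow$ (2), while (2) $\Leftrightarrow$ (3) is a routine rewrite of (2) in the language of $ln$ and $rn$.
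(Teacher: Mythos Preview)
Your proof is correct. The equivalence (1) $\Leftrightarrow$ (2) via Property~\ref{metaordering} matches the paper exactly. Where you diverge is in handling condition (3): you prove (2) $\Leftrightarrow$ (3) directly by a short combinatorial rewrite using the definitions of $ln$ and $rn$, whereas the paper declares (1) $\Rightarrow$ (3) obvious and then proves (3) $\Rightarrow$ (1) via the fixed-point Theorem~\ref{thm:fixpoint} (run $\mbox{TBLS}(G,\pregen,\sigma)$, look at the first discrepancy, and derive a contradiction with (3)). Your route is more elementary and self-contained, avoiding the appeal to Theorem~\ref{thm:fixpoint}; the paper's route, on the other hand, is deployed deliberately to showcase Theorem~\ref{thm:fixpoint} as a general tool (the paper says as much in the remark following the proof). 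Both arguments are short and valid; your observation that $c\leq_\sigma rn(a)$ whenever $a\in N(c)$ and $a<_\sigma c$ is exactly the leverage needed to pass from the special triples $(x,y,rn(x))$ in (3) back to arbitrary triples in (2).
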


\begin{proof}
	Suppose that $\sigma$ is a GEN-ordering. 
Using Property \ref{metaordering} on $\sigma$, we know that:\\
 $\sigma$ is a GEN-ordering\\
		$\iff$ for every $x,~y \in V$ such that  $x <_\sigma y$, we have $N_\sigma(x,x) \not\pregen N_\sigma(y,x)$\\
		$\iff$ for every $x,~y \in V$ such that  $x <_\sigma y$,  we have $N_\sigma(y,x) = \emptyset$ or $N_\sigma(x,x) \neq \emptyset$\\
		$\iff$ for every $x,~y \in V$ such that   $x <_\sigma y$, we have $N_\sigma(y,x) \neq \emptyset \Rightarrow N_\sigma(x,x) \neq \emptyset$\\
		$\iff$ for every triple of vertices $a,~b,~c$ such that $ a <_\sigma b <_\sigma c$, $a \in N(c)-N(b)$, there exists $d \in N(b)$ such that $d <_\sigma b$.\\

Therefore we proved the equivalence between 1 and 2. Let us consider 3, which is a reformulation of 2.  The fact that $1 \Rightarrow 3$, is obvious. To prove the converse, we can use Theorem \ref{thm:fixpoint}. Suppose that there exists $\sigma$ satisfying  3 but not 1. Let $\sigma'=\mbox{TBLS}(G,\pregen,\sigma) \neq \sigma$ and $i$ be the leftmost index where they differ ($z=\sigma'(i)\neq y= \sigma(i)$). This means  that  $l_i(y)=\emptyset$ and there exists some $x \in l_i(z)$. But this implies with condition 3
since $x<_{\sigma}y <_{\sigma}z \leq_{\sigma}rn(x)$ that $ln(y) \neq -1$ contradicting $l_i(y)=\emptyset$.
\end{proof}

\begin{remark}
Theorem \ref{thm:fixpoint} can be used also in the following proofs of this section, but we omit them to avoid tedious reading.
\end{remark}

\subsection{BFS (Breadth-First Search)}\label{sec:bfs}

We now focus on BFS. Here, we will follow the definition of BFS given in \cite{CK}, that is a graph search in which the vertices that are eligible are managed with a queue. Note that this differs for example from the definition given in \cite{CormenLR89}, where BFS stands for what we call layered search. Our notion of BFS is the most common implementation of a layered search.



\begin{theorem}\label{th:bfs}
We define $A\prebfs B$ if and only if $umin(A)>umin(B)$. Let $\sigma$ 
 be a permutation of $V$.
The following conditions are equivalent:\\[-.6cm]

\begin{enumerate}
		\item Vertex ordering $\sigma$ is a BFS-ordering (i.e., a TBLS using $\prebfs$). 
		\item For every  triple $a,b,c \in V$ such that $a<_\sigma b <_\sigma c$, $a \in N(c) - N(b)$, there exists d such that $d \in N(b)$ and $d <_\sigma a$.
		\item For every  triple $a,b,c \in V$ such that $a<_\sigma b <_\sigma c$ and $a$ is the leftmost vertex of $ N(b) \cup N(c)$
		in $\sigma$, we have $a \in N(b)$.
		
	\end{enumerate}
\end{theorem}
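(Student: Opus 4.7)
The plan is to follow the same template as the GEN-ordering proof: establish $1\iff 2$ by unfolding Property \ref{metaordering} with the specific $\prebfs$, and then obtain $2\iff 3$ as a straightforward reformulation by means of a leftmost/minimality argument. I do not expect to need Theorem \ref{thm:fixpoint}, since the direct unwrapping through Property \ref{metaordering} suffices (and the remark after the GEN proof says the fixed-point trick is left implicit in this section).

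For $1 \iff 2$, by Property \ref{metaordering} the ordering $\sigma$ is a BFS-ordering if and only if for every pair $x <_\sigma y$, $N_\sigma(x,x) \not\prebfs N_\sigma(y,x)$, which by definition of $\prebfs$ means $umin(N_\sigma(x,x)) \leq umin(N_\sigma(y,x))$ (with the convention $umin(\emptyset) = \infty$ so the empty case behaves correctly). I would then show that failure of this numerical inequality is equivalent to failure of condition 2. Concretely: if some triple $a <_\sigma b <_\sigma c$ violates 2, set $x=b$, $y=c$; then $a \in N_\sigma(c,b)$ gives $umin(N_\sigma(c,b)) \leq \sigma^{-1}(a)$, while the nonexistence of $d \in N(b)$ with $d <_\sigma a$, together with $a \notin N(b)$, forces $umin(N_\sigma(b,b)) > \sigma^{-1}(a)$, hence $N_\sigma(x,x) \prebfs N_\sigma(y,x)$. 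Conversely, given $x <_\sigma y$ with $N_\sigma(x,x) \prebfs N_\sigma(y,x)$, let $i = umin(N_\sigma(y,x))$ and $a = \sigma(i)$; then $a <_\sigma x$, $a \in N(y)$, $a \notin N(x)$ (otherwise $umin(N_\sigma(x,x)) \leq i$), and no neighbour of $x$ is $<_\sigma a$. Setting $b=x$, $c=y$ produces a violation of condition 2.

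For $2 \iff 3$, the direction $2 \Rightarrow 3$ is immediate: given $a <_\sigma b <_\sigma c$ with $a$ the leftmost vertex of $N(b) \cup N(c)$, if $a \notin N(b)$ then $a \in N(c) \setminus N(b)$ and condition 2 would supply some $d \in N(b)$ with $d <_\sigma a$, contradicting the leftmost property of $a$ in $N(b) \cup N(c)$. For $3 \Rightarrow 2$, given $a <_\sigma b <_\sigma c$ with $a \in N(c) \setminus N(b)$, let $a'$ be the leftmost element of $N(b) \cup N(c)$ in $\sigma$; then $a' \leq_\sigma a <_\sigma b <_\sigma c$, so condition 3 applies and yields $a' \in N(b)$. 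Since $a \notin N(b)$ we get $a' \neq a$, hence $a' <_\sigma a$, which is the $d$ required by condition 2.

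The only subtlety to watch is the bookkeeping around the $umin$ convention and the case where $N_\sigma(x,x) = \emptyset$ or $N_\sigma(y,x) = \emptyset$; these edge cases are handled uniformly by the $umin(\emptyset)=\infty$ convention, so no separate case analysis is needed. I expect the proof to read almost verbatim like the GEN-ordering proof, with the only nontrivial content being the identification of the critical index $i = umin(N_\sigma(y,x))$ in the $1 \Rightarrow 2$ direction.
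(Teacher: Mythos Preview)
Your proof is correct and uses the same engine as the paper (Property~\ref{metaordering} unfolded with $\prebfs$), but the decomposition differs. The paper does not prove $(1)\iff(2)$ at all: it cites \cite{CK} for that equivalence and then establishes $(1)\iff(3)$ directly via the chain
\[
N_\sigma(x,x)\not\prebfs N_\sigma(y,x)\ \Longleftrightarrow\ umin(N_\sigma(x,x))\le umin(N_\sigma(y,x)),
\]
reading off condition~(3) from the last inequality. You instead prove $(1)\iff(2)$ from scratch using Property~\ref{metaordering} and then handle $(2)\iff(3)$ by the leftmost-element argument. Your route is self-contained (no appeal to \cite{CK}) and makes the role of the critical index $i=umin(N_\sigma(y,x))$ explicit; the paper's route is shorter because it outsources half the work. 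Both are fine, and your treatment of the $umin(\emptyset)=\infty$ edge cases is exactly what the paper leaves implicit.
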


\begin{proof}
	The equivalence between condition (1) and condition (2) has been proved in \cite{CK}. We now prove that condition (1) is equivalent to condition (3).
	Suppose that $\sigma$ is a BFS-ordering. 
Using Property \ref{metaordering} on $\sigma$, we know that:\\ $\sigma$ is a BFS-ordering\\
		$\iff$ for every $x,~y \in V$, $x <_\sigma y$, $N_\sigma(x,x) \not\prebfs N_\sigma(y,x)$\\
		$\iff$ for every $x,~y \in V$, $x <_\sigma y$, $umin(N_\sigma(x,x)) \not> umin(N_\sigma(y,x))$ \\
		$\iff$ for every $x,~y \in V$, $x <_\sigma y$, $umin(N_\sigma(x,x)) \leq umin(N_\sigma(y,x))$\\
		$\iff$ for every triple of vertices $a,~b,~c$ such that $a<_\sigma b <_\sigma c$, and $a$ is the leftmost vertex of $ N(b) \cup N(c)$, we have $a \in N(b)$.\\

\end{proof}

\subsection{DFS (Depth First Search)}\label{sec:dfs}

We now turn our attention to Depth First Search.

\begin{theorem}\label{th:dfs}
We define $A\predfs B$ if and only if $umax(A)<umax(B)$.
Let $\sigma$  be a permutation of $V$.
The following conditions are equivalent:\\[-.6cm]

\begin{enumerate}
		\item Vertex ordering $\sigma$ is a DFS-ordering (i.e., a TBLS using $\predfs$). 
		\item For every triple of vertices $a,~b,~c$ such that $ a <_\sigma b <_\sigma c$, $a \in N(c)-N(b)$ there exists $d \in N(b)$ such that 
		$a <_\sigma d <_\sigma b$. 
		\item For every triple of vertices $a,~b,~c$ such that $ a <_\sigma b <_\sigma c$, and $a$ is the rightmost vertex of $N(b) \cup N(c)$ to the left of $b$ 
		in $\sigma$, we have $a \in N(b)$.
	\end{enumerate}
\end{theorem}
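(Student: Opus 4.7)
The plan is to follow the template of the BFS proof in Theorem~\ref{th:bfs} essentially verbatim, replacing $umin$ by $umax$ throughout and reversing the direction in which the witness $d$ lies relative to $a$. The equivalence $(1) \Leftrightarrow (2)$ is the standard ``pattern-condition'' for DFS already established in \cite{CK}, so I would simply cite it, mirroring the way the analogous BFS equivalence was handled. If a self-contained derivation were preferred, it falls out of Property~\ref{metaordering}: unfolding $N_\sigma(x,x)\not\predfs N_\sigma(y,x)$ with the definition of $\predfs$ yields $umax(N_\sigma(x,x))\geq umax(N_\sigma(y,x))$ for every $x<_\sigma y$. The forward direction then takes $x=b$, $y=c$, and lets $d$ be the neighbour of $b$ realising $umax(N_\sigma(b,b))$; since $a\notin N(b)$ forces the inequality to be strict, $d$ lands strictly between $a$ and $b$. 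The converse picks $a$ as the vertex at position $umax(N_\sigma(c,b))$ when the inequality fails, observes $a\notin N(b)$, and uses the $d$ supplied by (2) to derive the contradiction $umax(N_\sigma(b,b))\geq \sigma^{-1}(d)>\sigma^{-1}(a)=umax(N_\sigma(c,b))$.

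For $(1) \Leftrightarrow (3)$, I would proceed exactly as in the BFS argument, with the inequality reversed. Starting once more from Property~\ref{metaordering}, $\sigma$ is a DFS-ordering iff $umax(N_\sigma(x,x))\geq umax(N_\sigma(y,x))$ for all $x<_\sigma y$. Setting $x=b$ and $y=c$, this inequality says precisely that the maximum of $umax(N_\sigma(b,b))$ and $umax(N_\sigma(c,b))$ is attained inside $N_\sigma(b,b)$; translating back to vertices, the rightmost vertex of $N(b)\cup N(c)$ strictly to the left of $b$, whenever it exists, belongs to $N(b)$. Matching this against (3) closes the chain of equivalences.

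The main obstacle I expect is purely notational bookkeeping at the boundary. Condition (3) names a single vertex $a$, while the TBLS inequality compares two possibly empty label sets, so the degenerate case $N_\sigma(c,b)=\emptyset$ (in which $umax$ evaluates to $0$ and the statement of (3) is vacuous) and the tie case $umax(N_\sigma(b,b))=umax(N_\sigma(c,b))$ (in which the common rightmost vertex is a neighbour of $b$, so (3) holds) both need to be articulated cleanly. Once these edge cases are dispatched, the argument reduces to replacing $umin$ by $umax$ and ``leftmost'' by ``rightmost to the left of $b$'' in the BFS proof, and as the author's remark following the Generic Search theorem notes, Theorem~\ref{thm:fixpoint} may be invoked to avoid redoing the fixed-point step explicitly.
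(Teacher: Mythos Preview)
Your proposal is correct and follows essentially the same route as the paper: cite \cite{CK} for $(1)\Leftrightarrow(2)$, then derive $(1)\Leftrightarrow(3)$ by unfolding Property~\ref{metaordering} with the definition of $\predfs$ into the chain $N_\sigma(x,x)\not\predfs N_\sigma(y,x)\iff umax(N_\sigma(y,x))\le umax(N_\sigma(x,x))$ and translating this into the ``rightmost left neighbour'' statement with $x=b$, $y=c$. Your additional remarks on the degenerate and tie cases are sound and simply make explicit what the paper leaves implicit in its final $\iff$ step.
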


\begin{proof}
	The equivalence between condition (1) and (2) has been proved in \cite{CK}.	Let us show the equivalence between (1) and (3).
	Suppose that $\sigma$ is a DFS-ordering. 
Using Property \ref{metaordering} on $\sigma$, we know that:\\ $\sigma$ is a DFS-ordering\\
		$\iff$ for every $x,~y \in V$ such that $x <_\sigma y$, we have $N_\sigma(x,x) \not\predfs N_\sigma(y,x)$\\
		$\iff$ for every $x,~y \in V$ such that $x <_\sigma y$, we have $umax(N_\sigma(x,x)) \not< umax(N_\sigma(y,x))$\\
		$\iff$ for every $x,~y \in V$ such that $x <_\sigma y$, we have $umax(N_\sigma(y,x)) \leq umax(N_\sigma(x,x))$\\
		$\iff$ for every triple of vertices $a,~b,~c$ such that $a<_\sigma b <_\sigma c$ and $a$ is the rightmost vertex of $ N(b) \cup N(c)$ to the left of $b$ in $\sigma$, we have $a \in N(b)$.\\

\end{proof}

\subsection{\LexBFS (Lexicographic Breadth First Search)}\label{sec:lbfs}

\LexBFS  was first introduced in \cite{RTL} to recognize chordal graphs.  Since then many new applications of \LexBFS have been
presented ranging from recognizing various  families of graphs to finding vertices with high eccentricity or to finding the modular 
decomposition of a given graph,
see \cite{HMPV00,SURV,DOS09,Ted}.  

 \begin{theorem}\label{th:lbfs}
 We define $A \prelbfs B$ if and only if $umin(B-A)<umin(A-B)$. 
Let $\sigma$   be a permutation of $V$.
The following conditions are equivalent:\\[-.6cm]

\begin{enumerate}
		\item Vertex ordering $\sigma$ is a \LexBFS-ordering (i.e., a TBLS using $\prelbfs$)
		\item For every triple $a,b,c \in V$ such that $a<_\sigma b <_\sigma c$, $a \in N(c)-N(b)$,  there exists $d <_\sigma a$, $d\in N(b)-N(c)$.
		\item For every triple $a,b,c \in V$ such that $a<_\sigma b <_\sigma c$ and $a$ is the leftmost vertex of $N(b) \bigtriangleup N(c)$ to the left of $b$
		in $\sigma$, then a $\in N(b)-N(c)$.
	\end{enumerate}
 \end{theorem}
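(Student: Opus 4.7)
The plan is to follow the exact template of Theorems \ref{genericsearch}, \ref{th:bfs}, and \ref{th:dfs}: apply Property \ref{metaordering} to convert membership in the TBLS-ordering class into a local condition on pairs $x <_\sigma y$, unpack the definition of $\prelbfs$ to obtain statement (2), and then show that (3) is just the extremal reformulation of (2). As in the BFS and DFS cases, the equivalence $(1) \Leftrightarrow (2)$ is already in \cite{CK}, so the real work is $(1) \Leftrightarrow (3)$.

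For $(1) \Leftrightarrow (2)$, I would fix $x <_\sigma y$ and set $A = N_\sigma(x,x)$, $B = N_\sigma(y,x)$. Unwrapping $A \not\prelbfs B$ yields $umin(A-B) \le umin(B-A)$ (with the convention $umin(\emptyset) = \infty$). Interpreted on the graph: whenever there is some vertex in $N(y) - N(x)$ visited before $x$, some vertex in $N(x) - N(y)$ is visited strictly earlier. Renaming $(x,y) = (b,c)$ and letting $a$ be an arbitrary element of $B-A$ gives statement (2) verbatim, so by Property \ref{metaordering} we obtain the equivalence.

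For $(1) \Leftrightarrow (3)$, I would derive (3) from the same unpacking by taking $a$ to be the leftmost witness: the condition $umin(A-B) \le umin(B-A)$ says exactly that the minimum of the symmetric difference $A \bigtriangleup B$ (i.e.\ the leftmost element of $N(b) \bigtriangleup N(c)$ to the left of $b$) belongs to $A - B = N(b) - N(c)$. For the converse I would argue directly that (3) forces (2): given $a \in N(c) - N(b)$ with $a <_\sigma b$, let $a'$ be the leftmost element of $N(b) \bigtriangleup N(c)$ to the left of $b$; it exists since $a$ qualifies, by (3) it lies in $N(b) - N(c)$, and since $a \in N(c) - N(b)$ we must have $a' \neq a$, hence $a' <_\sigma a$, so $d := a'$ is the required witness. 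Alternatively, as the remark after Theorem \ref{genericsearch} suggests, I could invoke Theorem \ref{thm:fixpoint} to derive (1) directly from (3), by showing that any $\sigma$ satisfying (3) is a fixed point of $\mbox{TBLS}(G, \prelbfs, \sigma)$.

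The main obstacle is bookkeeping around the $umin$ convention and the fact that all dates in $A$ and $B$ lie strictly before $\siginv(x)$, so every witness produced automatically lies to the left of $b$; these are the same boundary points that needed care in the DFS and BFS proofs, but no new ideas are required. Everything else is a routine translation parallel to the preceding subsections.
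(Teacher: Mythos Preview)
Your proposal is correct and follows essentially the same route as the paper: cite the literature for $(1)\Leftrightarrow(2)$ and derive $(1)\Leftrightarrow(3)$ by applying Property~\ref{metaordering} and unwinding $N_\sigma(x,x)\not\prelbfs N_\sigma(y,x)$ into the inequality $umin(N_\sigma(x,x)-N_\sigma(y,x))\le umin(N_\sigma(y,x)-N_\sigma(x,x))$, which is exactly the statement that the leftmost element of $N(b)\bigtriangleup N(c)$ left of $b$ lies in $N(b)-N(c)$. The extra direct argument $(3)\Rightarrow(2)$ you sketch is fine but not needed, since the paper (like your main line) gets both $(2)$ and $(3)$ as reformulations of the same chain of equivalences starting from Property~\ref{metaordering}.
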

 
 \begin{proof}
	 The equivalence between (1) and (2) is well known, see \cite{RTL,GOL,BD97}. We now prove the equivalence between (1) and (3).
	 Suppose that $\sigma$ is a \LexBFS-ordering.
 Using Property \ref{metaordering} on $\sigma$, we know that:\\ $\sigma$ is a \LexBFS-ordering\\
		$\iff$ for every $x,~y \in V$, $x <_\sigma y$, we have $N_\sigma(x,x) \not\prelbfs N_\sigma(y,x)$\\
		$\iff$ for every $x,~y \in V$, $x <_\sigma y$, we have $umin(N_\sigma(y,x) - N_\sigma(x,x)) \not< umin(N_\sigma(x,x) -N_\sigma(y,x))$\\
		$\iff$ for every $x,~y \in V$, $x <_\sigma y$, we have $umin(N_\sigma(x,x)-N_\sigma(y,x)) \leq umin(N_\sigma(y,x)-N_\sigma(x,x))$\\
		$\iff$ for every triple of vertices $a,~b,~c$ such that $a<_\sigma b <_\sigma c$ and $a$ is the leftmost vertex of $ N(b) \bigtriangleup N(c)$ to the left of $b$ in $\sigma$, we have $a \in N(b)-N(c)$.\\
\end{proof}

\subsection{\LexDFS (Lexicographic Depth First Search)}\label{sec:ldfs}
Lexicographic Depth First Search (\LexDFS) was introduced in \cite{CK}.  

\begin{theorem}\label{th:ldfs}
We define $A \preldfs B$ if and only if $umax(A-B)<umax(B-A)$.
Let $\sigma$  be a permutation of $V$.
The following conditions are equivalent:\\[-.6cm]

\begin{enumerate}
		\item Vertex ordering $\sigma$ is a \LexDFS-ordering (i.e., a TBLS using $\preldfs$)
		\item For every triple $a,b,c \in V$ such that $a<_\sigma b <_\sigma c$, $a \in N(c)-N(b)$,  there exists $a <_\sigma d <_\sigma b$, $d\in N(b)-N(c)$.
		\item For every triple $a,b,c \in V$ such that $a<_\sigma b <_\sigma c$ and $a$ is the rightmost vertex in $N(b) \bigtriangleup N(c)$ to the left of $b$
		in $\sigma$,   $a \in N(b)-N(c)$.
	\end{enumerate}
\end{theorem}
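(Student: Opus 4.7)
The plan is to mirror exactly the template used for \LexBFS in Theorem \ref{th:lbfs}, swapping the role of $umin$ for $umax$ and ``leftmost'' for ``rightmost''. The equivalence (1)$\iff$(2) is already in \cite{CK}, so I would simply cite it and focus on (1)$\iff$(3).

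For (1)$\iff$(3), I would apply Property \ref{metaordering} to the partial order $\preldfs$, and then unfold its definition. This gives the chain: $\sigma$ is a \LexDFS-ordering iff for all $x <_\sigma y$, $N_\sigma(x,x) \not\preldfs N_\sigma(y,x)$, iff $umax(N_\sigma(x,x) - N_\sigma(y,x)) \not< umax(N_\sigma(y,x) - N_\sigma(x,x))$, iff $umax(N_\sigma(y,x) - N_\sigma(x,x)) \leq umax(N_\sigma(x,x) - N_\sigma(y,x))$. The last line is the quantitative translation; the remaining work is to rewrite it in the ``rightmost in the symmetric difference'' language.

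The translation step is where the only real content lies, and is the step I expect to be mildly delicate. Renaming $b=x$ and $c=y$, the set $N_\sigma(b,b) - N_\sigma(c,b)$ consists of the visiting dates of vertices in $N(b) \setminus N(c)$ that lie before $b$, and $N_\sigma(c,b) - N_\sigma(b,b)$ consists of dates of vertices in $N(c) \setminus N(b)$ before $b$. These two sets are disjoint (a date cannot simultaneously correspond to a vertex in both $N(b)\setminus N(c)$ and $N(c)\setminus N(b)$), and their union records exactly $N(b) \bigtriangleup N(c)$ restricted to the left of $b$. Consequently the inequality $umax(N_\sigma(c,b) - N_\sigma(b,b)) \leq umax(N_\sigma(b,b) - N_\sigma(c,b))$ says exactly that whenever there is some vertex of $N(b) \bigtriangleup N(c)$ to the left of $b$, its rightmost representative $a$ lies in $N(b) \setminus N(c)$, which is condition (3). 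The convention $umax(\emptyset)=0$ takes care of the degenerate cases (if no vertex of $N(b)\bigtriangleup N(c)$ lies left of $b$ the inequality holds vacuously; if the rightmost such vertex lies in $N(c)\setminus N(b)$ the inequality fails).

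Finally, following the remark the authors make after Theorem \ref{genericsearch}, I would note that Theorem \ref{thm:fixpoint} could be invoked to shorten the argument in the reverse direction, but that Property \ref{metaordering} already provides a clean bidirectional characterization, so I would stick to the compact ``chain of $\iff$'' style already adopted in the \LexBFS proof.
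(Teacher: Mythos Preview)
Your proposal is correct and follows essentially the same approach as the paper: cite \cite{CK} for (1)$\iff$(2), then derive (1)$\iff$(3) via the chain of equivalences obtained from Property~\ref{metaordering} applied to $\preldfs$. If anything, you give more detail than the paper does on the final translation step (identifying $N_\sigma(b,b)\setminus N_\sigma(c,b)$ with the dates of $N(b)\setminus N(c)$ before $b$, and handling the empty case via $umax(\emptyset)=0$); the paper simply asserts that last equivalence.
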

 
 \begin{proof}
 The equivalence between (1) and (2) is well known, see \cite{CK}. We now prove the equivalence between (1) and (3).
 	Suppose that $\sigma$ is a \LexDFS-ordering. 
Using Property \ref{metaordering} on $\sigma$, we know that:\\ $\sigma$ is a \LexDFS-ordering\\
		$\iff$ for every $x,~y \in V$ such that $x <_\sigma y$, we have $N_\sigma(x,x) \not\preldfs N_\sigma(y,x)$\\
		$\iff$ for every $x,~y \in V$ such that $x <_\sigma y$, we  have $umax(N_\sigma(x,x)-N_\sigma(y,x)) \not< umax(N_\sigma(y,x) - N_\sigma(x,x))$\\
		$\iff$ for every $x,~y \in V$ such that  $x <_\sigma y$, we have $umax(N_\sigma(y,x)-N_\sigma(x,x)) \leq umax(N_\sigma(x,x)-N_\sigma(y,x))$\\
		$\iff$ for every triple of vertices $a,~b,~c$ such that $a<_\sigma b <_\sigma c$ and $a$ is the rightmost vertex of $ N(b) \bigtriangleup N(c)$ to the left of $b$ in $\sigma$, we have $a \in N(b)-N(c)$.\\
\end{proof}

The symmetry between BFS and DFS (respectively  \LexBFS and \LexDFS) becomes clear when using the TBLS ordering formalism. This symmetry was also clear using the pattern-conditions as introduced in \cite{CK}, and in fact lead to the discovery of \LexDFS.

To finish with classical searches, we notice that \textbf{Maximum Cardinality Search (MCS)} as introduced in \cite{TY84}, can easily be defined using the partial order:
$A \premcs B$ if and only if $|A| < |B|$.
Similarly \textbf{MNS  (Maximal Neighbourhood Search)} as introduced in ~\cite{Shier} for chordal graph recognition,  is a search such that $A\premns B$ if and only if $A \subsetneq B$, i.e., it uses the strict inclusion partial order between subsets.

To conclude this section we use  Theorem  \ref{metaexten} to easily rediscover the relationships amongst various graph searches as noted in \cite{CK} and \cite{BGS11}.

\begin{theorem}\label{Xhier}
The partial order of the relation extension between  classical searches is described in Figure 1.
\end{theorem}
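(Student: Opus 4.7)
The plan is to invoke Theorem~\ref{metaexten}, which reduces the question of whether $S \ll S'$ to the question of whether the partial order $\prec_{S'}$ refines $\prec_S$ on $\PNF$. Thus Figure~1 amounts to a finite collection of pointwise assertions about the definitions already collected in Theorems~\ref{genericsearch}--\ref{th:ldfs} together with the two definitions of $\premcs$ and $\premns$ at the end of this section; each asserted arrow in the diagram will be a routine implication on those definitions, and each missing arrow will be witnessed by a concrete pair $(A,B)\in\PNF\times\PNF$.

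First I would check the positive relations bottom-up. For the $\pregen$ layer, note that $A \pregen B$ forces $A = \emptyset$ and $B \ne \emptyset$, from which $umin(A) = \infty > umin(B)$, $umax(A) = 0 < umax(B)$, $umin(B - A) = umin(B) < umin(A - B) = \infty$, $umax(A - B) = 0 < umax(B - A)$, $|A| < |B|$, and $A \subsetneq B$; hence $GEN \ll S'$ for every other search $S'$ in the diagram. For the $\premns$ layer, $A \subsetneq B$ implies $A - B = \emptyset$ while $B - A \ne \emptyset$, which immediately gives $A \prelbfs B$, $A \preldfs B$, and $A \premcs B$. Finally, $BFS \ll LBFS$ holds because $umin(A) > umin(B)$ forces $umin(B) \in B - A$, hence $umin(B - A) \le umin(B) < umin(A) \le umin(A - B)$ (with the convention $umin(\emptyset) = \infty$); the symmetric $umax$ argument gives $DFS \ll LDFS$.

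Second, for every ordered pair $(S,S')$ for which Figure~1 claims the absence of $S \ll S'$, I would exhibit a small pair of label sets witnessing $A \prec_S B$ but $A \not\prec_{S'} B$. A handful of well-chosen examples should suffice. For instance, $A=\{1,2\},\, B=\{1,2,3\}$ shows $MNS \not\ll BFS$ (since $umin(A)=umin(B)$) and, with $A=\{2,3\},\, B=\{1,2,3\}$, $MNS \not\ll DFS$ (since $umax(A)=umax(B)$). The pair $A=\{2\},\, B=\{1,3\}$ gives $A \prebfs B$ and $A \prelbfs B$ while $A$ and $B$ are incomparable for $\predfs$, $\preldfs$, $\premns$, $\premcs$; reversing it (or taking $A=\{1,3\},\, B=\{2\}$) simultaneously refutes the corresponding arrows out of $DFS$ and $LDFS$. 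Pairs like $A=\{1\},\, B=\{1,2\}$ versus $A=\{2,3\},\, B=\{1\}$ refute the remaining potential implications in and out of $MCS$, and $A=\{1,2\},\, B=\{1,3\}$ separates $\prelbfs$ from $\preldfs$ in both directions.

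The main obstacle is not any individual implication but the combinatorial bookkeeping: with seven searches there are $\binom{7}{2}\cdot 2 = 42$ ordered pairs to classify, and one must be careful to confirm every asserted incomparability (for example between $MCS$ and $LBFS$, or between $LBFS$ and $LDFS$) rather than only the most visible ones. The proof contains no genuinely new idea beyond Theorem~\ref{metaexten} and direct inspection of the five partial-order definitions, so the structure is simply: verify the explicitly drawn arrows via the implications above, verify the remaining non-arrows via the witness pairs, and conclude by transitivity that Figure~1 is exactly the Hasse diagram of $\ll$ on $\{GEN,BFS,DFS,MNS,MCS,LBFS,LDFS\}$.
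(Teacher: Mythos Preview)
Your positive verification of the arrows is essentially the paper's proof: the paper too invokes Theorem~\ref{metaexten} and then checks, one arc at a time, that $\pregen$ is refined by $\prebfs$, $\predfs$, $\premns$; that $\prebfs$ is refined by $\prelbfs$ and $\predfs$ by $\preldfs$; and that $\premns$ is refined by $\prelbfs$, $\preldfs$, $\premcs$. The paper actually stops there and does \emph{not} exhibit any witnesses for the non-arrows, so your second half already goes beyond what the paper offers.

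That said, one of your witness pairs is wrong. You claim that $A=\{2\}$, $B=\{1,3\}$ are incomparable for $\predfs$, $\preldfs$, and $\premcs$; in fact $umax(A)=2<3=umax(B)$ gives $A\predfs B$, likewise $umax(A-B)=2<3=umax(B-A)$ gives $A\preldfs B$, and $|A|=1<2=|B|$ gives $A\premcs B$. The pair works only for $\premns$. A corrected witness separating the BFS side from the DFS side (and from $\premcs$, $\premns$) is $A=\{2,3\}$, $B=\{1\}$: here $umin(A)>umin(B)$ and $umin(B-A)<umin(A-B)$ so $A\prebfs B$ and $A\prelbfs B$, yet $umax(A)>umax(B)$, $umax(A-B)>umax(B-A)$, $|A|>|B|$, and $A\not\subsetneq B$, so $A\not\prec_{S'} B$ for $S'\in\{DFS,LDFS,MCS,MNS\}$. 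The bookkeeping obstacle you anticipated is real, and this is precisely where the slip occurred.
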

\begin{proof}
	To show that a search extends another one we will use Theorem \ref{metaexten}.

	Let us show that $\prebfs$ (respectively $\predfs$, $\premns$) is an extension of $\pregen$. Let $A \pregen B$. By definition we have $A = \emptyset$ and $B \neq \emptyset$. As a consequence we have $umin(B) < umin(A)$ and thus $A \prebfs B$ (respectively  $umax(A) < umax(B)$ implying $A \predfs B$, and $A \subsetneq B$ implying $A \premns B$).

	We now show that $\prelbfs$ is an extension of $\prebfs$. Let $A \prebfs B$. By definition, we have $umin(B) < umin (A)$. As a consequence, $umin(B-A) < umin (A-B)$, implying $A \prelbfs B$.

	To see that  $\preldfs$ is an extension of $\predfs$, first suppose that  $A \predfs B$. Therefore $umax(A) < umax (B)$ and as a consequence  $umax(A-B) < umax (B-A)$ thereby implying $A \preldfs B$.

Similarly  $\prelbfs$ (respectively $\preldfs$, $\premcs$) is an extension of $\premns$. Let $A \premns B$;  by definition we have $A \subsetneq B$. As a consequence, $umin(B-A) < umin (A-B)$ (respectively $umax(A-B) < umax (B-A)$ and  $|A| < |B|$). So $A \prelbfs B$ (respectively $A \preldfs B$ and $A \premcs B$).

\end{proof}

\vspace{0.5cm}

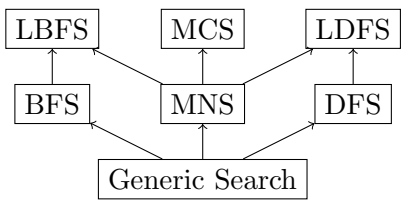
\begin{figure}[ht]\label{fig:panorama}\vspace{-.4cm}
\begin{center}
\begin{tikzpicture}
\node[draw](B) at (3,2) {\LexBFS};
\node[draw](C) at (7,2) {\LexDFS};
\node[draw](D) at (5,2) {MCS};
\node[draw](E) at (5,1) {MNS};
\node[draw](G) at (3,1) {BFS};
\node[draw](H) at (7,1) {DFS};
\node[draw](J) at (5,0) {Generic Search};

\draw[->] (J)--(E);
\draw[->] (J)--(G);
\draw[->] (J)--(H);
\draw[->] (E)--(B);
\draw[->] (E)--(D);
\draw[->] (E)--(C);
\draw[->] (H)--(C);

\draw[->] (G)--(B);

\end{tikzpicture}

\end{center}\vspace{-.4cm}

\vspace{0.5cm}

\caption{Summary of the hereditary relationships proved in Theorem \ref{Xhier}. An arc from Search $S$ to Search $S'$ means that $S'$ extends $S$.}
\end{figure}

\subsection{Limitations of the TBLS model} 

To finish, let us remark that there exists at least one known search that does not fit into the TBLS model.   
In the following, recall that $label_i(v)$ for a vertex $v$ denotes the label of $v$ at the beginning of step $i$ of Algorithm 1. 
\emph{Layered Search} starts at a vertex s, and ensures that if $dist(s,x)<dist(s,y)$ then $\sigma(x)<\sigma(y)$. In other words it respects the layers (vertices at the same distance from the start vertex $s$).
We now show that this search is not a TBLS by considering the graph $G$ in Figure \ref{fig:limit}.  Assume that we have started the Layered Search with $x_1,~x_2,~x_3,~x_4$ and so $label_5(x_5)=\{3\}$ and $label_5(x_6)=\{4\}$. In a Layered Search, both $x_5$ and $x_6$ must be eligible at step 5.  Thus we must have neither $\{3\}\prec \{4\}$  nor $\{4\}\prec \{3\}$; they are incomparable labels.
But now consider graph $H$ in Figure \ref{fig:limit} and assume that again we have started the search with $v_1,~v_2,~v_3,~v_4$. So we have $label_5(v_5)=\{3\}$ and $label_5(v_6)=\{4\}$. But in this graph we have to visit $v_5$ before $v_6$. Therefore we must have  $\{3\} \prec \{4\}$.
As a conclusion,  no partial ordering of the labels can capture all Layered Search orderings  and so this search cannot be written in our formalism. The same seems true for Min-LexBFS as defined in \cite{Meister05} and Right Most Neighbour as  used in \cite{CDH}.

\begin{figure}[ht]
\begin{center}
\begin{tikzpicture}

\coordinate(X1) at (0,0.5);
\coordinate(X2) at (1,0);
\coordinate(X3) at (1,0.5);
\coordinate(X4) at (1,1);
\coordinate(X5) at (2,0.5);
\coordinate(X6) at (2,1);

\draw (X1) node[above left] {$x_1$} node{$\bullet$};
\draw (X2) node[above] {$x_2$} node{$\bullet$};
\draw (X3) node[above] {$x_3$} node{$\bullet$};
\draw (X4) node[above] {$x_4$} node{$\bullet$};
\draw (X5) node[above] {$x_5$} node{$\bullet$};
\draw (X6) node[above] {$x_6$} node{$\bullet$};

\coordinate(V1) at (4,0.5);
\coordinate(V2) at (5,1);
\coordinate(V3) at (5,0);
\coordinate(V4) at (6,1);
\coordinate(V5) at (6,0);
\coordinate(V6) at (7,1);

\draw(V1) node[above left]{$v_1$} node{$\bullet$};
\draw(V2) node[above]{$v_2$} node{$\bullet$};
\draw(V3) node[above]{$v_3$} node{$\bullet$};
\draw(V4) node[above]{$v_4$} node{$\bullet$};
\draw(V5) node[above]{$v_5$} node{$\bullet$};
\draw(V6) node[above]{$v_6$} node{$\bullet$};

\draw (X6)--(X4)--(X1)--(X2);
\draw (X1)--(X3)--(X5);

\draw (V1)--(V2)--(V4)--(V6);
\draw (V1)--(V3)--(V5);
\end{tikzpicture}
\end{center}
\caption{Graph $G$ on the left and $H$ on the right.\label{fig:limit}}
\end{figure}
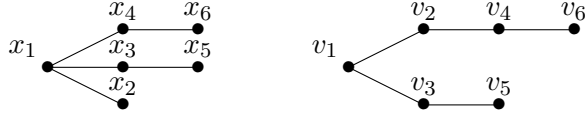

\section{The relationship between GLS and TBLS}
\label{compa}


We are now interested in determining the relationship between TBLS and GLS. First let us recall  GLS from \cite{BGS11}. It depends on a \emph{labeling structure} which consists of four elements:
\begin{itemize}  
\item a set of labels $L$;
\item a strict order  $\pregls$ over the label-set $L$; 
\item an initial label $l_0$;
\item an  UPLAB function $L\times \mathbb{N}^+ \to L$.
\end{itemize}

The GLS algorithm then takes as input a graph $G=(V,E)$ (over which the search is performed) as well as a labeling structure.

The computational power of the UPLAB function is unbounded, even though it must be deterministic, and the label set $L$ may be any set.
In contrast, TBLS uses a fixed initial label $\emptyset$, a fixed label set $\PNF$, and a fixed simple updating function. Despite these restrictions, it is, however equivalent to $GLS$ in the sense of Theorem \ref{equivalence}.

\vspace{0.5cm}

{\small
	\begin{algorithm}[H] 
\caption{GLS($G,\{L,\pregls,l_0,UPLAB\}$)}

\lForEach{$v\in V$}{$l(v)\leftarrow l_0$}\
\For{$i\leftarrow 1$ \KwTo $n$}{
        Let $Eligible$ be the set of eligible vertices, i.e., those unnumbered vertices $v$ with $l(v)$ maximal with respect to $\pregls$\;
        Let $v$ be some vertex from $Eligible$\;
	$\sigma(i)\leftarrow v$\;
	\ForEach{\textup{unnumbered  vertex $w$ adjacent to $v$}}{
		$l(w)\leftarrow UPLAB(l(w),i)$\;
		}
	}
\end{algorithm}
}
\vspace{0.5cm}

We now prove that for each GLS, there is a $\pregtls$ producing the same orderings, and conversely. First we need some notation.


At each iteration $i$ of $\mbox{TBLS}(G,\prec_{TBLS}, \sigma)$, let $l_{TBLS,i}(v)$ be the label assigned to every unnumbered vertex  $v$ by $\mbox{TBLS}(G,\prec_{TBLS}, \sigma)$, i.e., the label that will be used to choose the $i$th vertex. Similarly, let $l_{GLS,i}(v)$ be the label assigned to every unnumbered vertex $v$ by GLS($G, \{L,\pregls,l_0,UPLAB\}$), i.e., the label that will be used to choose the $i$th vertex.
Given a graph $G=(V,E)$, and an ordering $\sigma$ of $V$, let us define $I_k^{\sigma}(v)=N(v) \cap \{\sigma(1), \dots \sigma(k)\}$ to be the neighbours of $v$ visited at step $k$ or before. Let us define $p_k^j$ to be the $j$-th element of $I_k^{\sigma}(v)$ sorted in increasing visiting ordering.

\begin{proposition}[\cite{BGS11}]\label{GLSlabelOK}
Let $S$ be a labeling structure, $G=(V,E)$ a graph. At iteration $i$ of GLS($G,S$) computing an ordering $\sigma$, for every unnumbered vertex $v$:  \\$l_{GLS,i}(v)= UPLAB (\dots UPLAB (l_o, p_1), \dots, p_{k})$  where $(p_1, \dots, p_k)$ is the sequences of
numbers in $N_{\sigma}(v, \rs(i))$ in increasing order.
\end{proposition}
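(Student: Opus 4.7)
The plan is to prove the statement by straightforward induction on the iteration number $i$, with the invariant exactly as stated: at the start of iteration $i$, for every still-unnumbered vertex $v$, the label $l_{GLS,i}(v)$ equals $UPLAB(\dots UPLAB(l_0,p_1),\dots,p_k)$ where $(p_1,\dots,p_k)$ is the list of visiting dates of neighbours of $v$ already numbered by the algorithm, in increasing order. This invariant is a direct transcription of the update rule, so the induction is really just bookkeeping.

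For the base case $i=1$, no vertex has yet been numbered, so the sequence $(p_1,\dots,p_k)$ is empty ($k=0$), and the initialisation loop sets $l(v)\leftarrow l_0$ for every $v$, matching the empty composition. For the inductive step, assume the invariant at iteration $i$. The algorithm chooses some eligible vertex, say $v_i=\sigma(i)$, numbers it, and for each unnumbered neighbour $w$ of $v_i$ performs $l(w)\leftarrow UPLAB(l(w),i)$. Fix any vertex $w$ unnumbered at the start of iteration $i+1$. If $w\notin N(v_i)$, then $l_{GLS,i+1}(w)=l_{GLS,i}(w)$ and the set of numbered neighbours of $w$ is unchanged, so the invariant is preserved. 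If $w\in N(v_i)$, then the numbered neighbours of $w$ gain exactly one new element, $v_i$, whose visiting date is $i$; since $i$ is strictly larger than all indices $p_1<\dots<p_k<i$ of previously numbered neighbours, the new increasing sequence is simply $(p_1,\dots,p_k,i)$. By the inductive hypothesis, $l_{GLS,i+1}(w)=UPLAB(l_{GLS,i}(w),i)=UPLAB(UPLAB(\dots UPLAB(l_0,p_1),\dots,p_k),i)$, which is the desired form.

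There is no genuine obstacle here: the only point that needs a brief justification is that new visiting dates are always appended at the right end of the sequence, which follows from the fact that iterations are indexed $1,2,\dots,n$ in order and each iteration numbers exactly one vertex, so the date added to a neighbour's label at step $i$ strictly exceeds all dates previously added. Everything else is a direct reading of the GLS pseudocode.
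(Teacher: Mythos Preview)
Your induction is correct and is exactly the natural argument. Note that the paper does not actually supply a proof of this proposition: it is stated with a citation to \cite{BGS11} and no proof environment follows. However, the paper \emph{does} prove the analogous statement for TBLS (Proposition~\ref{TBLSlabelOK}) by precisely the same induction on~$i$ that you wrote, so your approach matches the paper's style for the companion result.
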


\begin{proposition}\label{TBLSlabelOK}
Let $G=(V,E)$ be a graph, $v\in V$, and   $\sigma$  the ordering produced by TBLS($G,\prec, \tau$). At iteration $i$ of $\mbox{TBLS}(G,\prec, \tau$), for every unnumbered vertex $v$:  \\$l_{TBLS,i}(v)=I_{i-1}^{\sigma}(v)$.
\end{proposition}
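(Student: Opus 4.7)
The plan is to prove the proposition by a straightforward induction on the iteration index $i$, showing that the claim is essentially the loop invariant of Algorithm \ref{gtls}. Before starting I would clarify one notational point: the labels manipulated by TBLS are sets of visiting dates (each time an index $i$ is appended in the inner for-loop), so the right-hand side $I_{i-1}^{\sigma}(v)$ should be read as the set of visiting dates $\{j \mid 1\le j\le i-1,\ \sigma(j)\in N(v)\}$ rather than as a set of vertices, consistent with the convention used in Proposition \ref{GLSlabelOK}.

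For the base case $i=1$, the initialization line of Algorithm \ref{gtls} sets $l_{TBLS,1}(v)=\emptyset$ for every $v$, matching $I_0^\sigma(v)=\emptyset$. For the inductive step, I would assume $l_{TBLS,i}(v)=I_{i-1}^\sigma(v)$ for every vertex $v$ still unnumbered at the start of step $i$, then examine what happens during that step. Only the vertex $\sigma(i)$ gets numbered, and the only labels modified are those of unnumbered neighbours of $\sigma(i)$, each of which receives the new element $i$. Splitting into the two cases according to whether $v\in N(\sigma(i))$ or not, the label of an unnumbered $v$ at the start of step $i+1$ is either $l_{TBLS,i}(v)\cup\{i\}=I_{i-1}^\sigma(v)\cup\{i\}=I_i^\sigma(v)$, or unchanged and still equal to $I_{i-1}^\sigma(v)=I_i^\sigma(v)$ (since $\sigma(i)\notin N(v)$ contributes nothing). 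In both cases the invariant is preserved.

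The proof is essentially bookkeeping, so there is no real conceptual obstacle. The only cautionary point is the notational identification flagged above, together with the observation that the invariant is only claimed for \emph{unnumbered} vertices at the start of each iteration (numbered vertices have their labels frozen from the moment they are visited). Once these points are acknowledged, the claim follows directly from inspection of Algorithm \ref{gtls}.
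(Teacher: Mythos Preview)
Your proof is correct and follows essentially the same route as the paper's own argument: an induction on $i$, with the base case given by the initialization to $\emptyset$ and the inductive step split according to whether the unnumbered vertex is a neighbour of $\sigma(i)$ or not. Your remark that $I_{i-1}^{\sigma}(v)$ must be read as a set of visiting dates (rather than of vertices, as its literal definition suggests) is a useful clarification that the paper glosses over.
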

\begin{proof}

 The proof goes by induction. At the first step of the algorithm, every vertex has $\varnothing$ as its label, and has no previously visited neighbour.
 
 Assume that at iteration $i$, every unnumbered vertex $x$ has label $l_{TBLS,i}(x)= I_{i-1}^{\sigma}(x)$. After this iteration, for every unnumbered neighbour $v$ of $\sigma(i)$, \\$l_{TBLS,i}(v)=l_{TBLS,i-1}(v)\cup\{i\}$, which is indeed $I_{i}^{\sigma}(v)$, and for every unnumbered non-neighbour $v$ of $\sigma(i)$, $l_{TBLS,i}(v)=l_{TBLS,i-1}(v)$, which is again $I_{i}^{\sigma}(v)$.

\end{proof}

\begin{theorem}\label{equivalence}
A set $T$ of orderings of the vertices of a graph $G$ is equal to $\{TBLS(G,\prec_{TBLS},\tau) \mid \tau\in \mathfrak{S}_n  \}$ if and only if there exists a labeling structure $S=(L,\prec_{GLS},l_0,UPLAB)$ such that $T$ is equal to the set of orderings produced by $GLS(G,S)$. 
\end{theorem}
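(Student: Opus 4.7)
The plan is to prove the two directions of the equivalence separately, in each case constructing a natural translation of the governing data (the partial order $\prec_{TBLS}$ on $\PNF$ versus the full labeling structure $(L, \pregls, l_0, UPLAB)$) and then showing, by invoking Propositions \ref{GLSlabelOK} and \ref{TBLSlabelOK}, that the two algorithms produce the same $Eligible$ set at every iteration. The common step in both directions is that once the $Eligible$ sets agree at every iteration, the two algorithms produce the same family of orderings: GLS outputs any ordering that picks some eligible vertex at each step, and by Theorem \ref{thm:fixpoint}, TBLS with $\tau := \sigma$ produces exactly such a $\sigma$.

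For the ``TBLS implies GLS'' direction, given $\prec_{TBLS}$ on $\PNF$, I would take the labeling structure $L = \PNF$, $l_0 = \emptyset$, $UPLAB(A,i) = A \cup \{i\}$, and $\pregls \; = \; \pregtls$. Proposition \ref{TBLSlabelOK} and the analogous unfolding of $UPLAB$ from Proposition \ref{GLSlabelOK} show that on any partial run both algorithms assign the label $I_{i-1}^{\sigma}(v)$ to each unnumbered vertex $v$ at the start of step $i$, so the $Eligible$ sets coincide. Hence the TBLS orderings (ranging over all $\tau$) and the GLS orderings (ranging over all choices of eligible vertex) coincide.

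For the ``GLS implies TBLS'' direction, given a labeling structure $S = (L, \pregls, l_0, UPLAB)$, I would define a map $f: \PNF \to L$ by $f(\emptyset) = l_0$ and $f(\{a_1 < a_2 < \dots < a_k\}) = UPLAB(\dots UPLAB(l_0, a_1), \dots, a_k)$. Then I set $A \pregtls B$ iff $f(A) \pregls f(B)$; a brief check gives that $\pregtls$ inherits irreflexivity and transitivity from $\pregls$, so it is a strict partial order on $\PNF$. By Proposition \ref{GLSlabelOK}, the GLS label of an unnumbered vertex $v$ at step $i$ equals $f(I_{i-1}^{\sigma}(v))$, and by Proposition \ref{TBLSlabelOK}, the TBLS label is $I_{i-1}^{\sigma}(v)$. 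Therefore the $Eligible$ sets of the two algorithms coincide at every iteration, and the two sets of produced orderings coincide as before.

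The main obstacle is the asymmetry between the two models: TBLS is made deterministic by a tie-break permutation $\tau$, whereas GLS allows an arbitrary, unspecified choice among maximal-label vertices. The fixed-point Theorem \ref{thm:fixpoint} is precisely the tool that bridges this gap: any ordering $\sigma$ that at every step picks a vertex from the current $Eligible$ set is realized by the TBLS execution with $\tau = \sigma$. Thus the set of TBLS orderings obtained by ranging $\tau$ over $\mathfrak{S}_n$ is exactly the set of orderings compatible with the $Eligible$-set rule, matching the GLS side. Once this is in place, the equivalence of eligible sets furnished by the two propositions finishes both directions.
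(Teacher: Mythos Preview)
Your proposal is correct and follows essentially the same approach as the paper: both directions use the natural translation (for TBLS$\to$GLS, the labeling structure $(\PNF,\pregtls,\emptyset,(A,i)\mapsto A\cup\{i\})$; for GLS$\to$TBLS, the pullback order $A\pregtls B\iff f(A)\pregls f(B)$ via the iterated-UPLAB map $f$), and then Propositions~\ref{GLSlabelOK} and~\ref{TBLSlabelOK} are used to show the $Eligible$ sets coincide step by step. The only cosmetic difference is that you package the determinism-vs-nondeterminism bridge by citing Theorem~\ref{thm:fixpoint}, whereas the paper inlines the same argument inside an induction by setting $\tau$ equal to the GLS output.
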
 
\begin{proof}
First, consider an ordering $\prec_{TBLS}$. The set $\{\mbox{TBLS}(G,\prec_{TBLS},\tau) \mid \tau\in \mathfrak{S}_n  \}$ is equal to the set of all orderings produced by GLS($G,S$) with $S=(\PN, \prec_{TBLS},\varnothing,cons)$, where $cons(l(w),i)$ returns $l(w)\cup\{i\}$.

Conversely, consider $S=(L,\prec_{GLS},l_0,UPLAB)$ a labeling structure. We show that there exists an order $\prec_{TBLS}$ such that, for every graph $G$, the set of all orderings produced by GLS($G,S$) is equal to $\{TBLS(G,\prec_{TBLS},\tau) \mid \tau\in \mathfrak{S}_n  \}$.

By propositions \ref{GLSlabelOK} and \ref{TBLSlabelOK} we can define  a mapping $\phi$ from $\PN$ (the labels used by TBLS) into labels  \emph{effectively} used by $GLS$ (i.e., those that can be assigned to a vertex during some execution of the algorithm). 
$\phi$ is recursively defined as $\phi(\varnothing)=l_0$, and if $max(A)=i$, then $\phi(A)=$ UPLAB($\phi(A) \backslash \{i\},i$).
Notice the same GLS-label $l$ may be reached in different ways. 
Subset $\phi^{-1} (l) \subset \PN$ is the set of TBLS-labels that correspond to that label. It is empty for all labels not effectively used.

Then, we define $\prec_{TBLS}$ as follows: $\forall A, A' \in \PNF$, $A\prec_{TBLS} A'$ if and only if $\phi(A)\prec_{GBLS} \phi(A')$.
We are now ready to prove the theorem.
The proof goes by induction. Before the first iteration, for every vertex $v$, $l_{GLS}^0(v)=l_0$ and $l_{TBLS}^0(v)=\varnothing$. $GLS$ can pick any of these vertices, in particular the one that would be picked by $TBLS (G,\prec_{TBLS},\tau)$, and by setting $\tau$ to be equal to a given output of $GLS(G,S)$, TBLS would indeed chose the same vertex.

Now, assume that when step $i$ begins, both algorithms have produced the ordering $\sigma(1) \dots \sigma(i-1)$, and the $i$th vertex is about to be chosen. By propositions  \ref{GLSlabelOK} and \ref{TBLSlabelOK}, for every unnumbered vertex $x$, $l_{TBLS,i}(x)= I_{i-1}^{\sigma}(x)$, and $l_{GLS,i}(x)= UPLAB (\dots UPLAB (l_o, p_{i-1}^1)\dots ,p_{i-1}^{|I_k^{\sigma}(x)|})$.
By the definition of $\phi$, we have that $l_{GLS,i}(x)=\phi(l_{TBLS,i}(x))$, and $l_{TBLS,i}(x)\in \phi^{-1}(l_{GLS,i}(x))$. Then, by the definition of $\prec_{TBLS}$, for two unnumbered vertices $v$ and $w$, we know that $l_{TBLS,i}(v) \prec_{TBLS} l_{TBLS,i}(w)$ if and only if $\phi(l_{TBLS,i}(v)) \prec_{GLS} \phi(l_{TBLS,i}(w))$, and $l_{GLS,i}(v) \prec_{GLS} l_{GLS,i}(w)$ if and only if for all $l_v\in \phi^{-1}(l_{GLS,i}(v))$ and all $l_w\in \phi^{-1}(l_{GLS,i}(w))$, $l_v\prec_{TBLS} l_w$. Thus, the set of eligible vertices at step $i$ is the same for both algorithms. $GLS$ can pick any of these vertices, in particular the one that would be picked by $\mbox{TBLS}(G,\prec_{TBLS},\tau)$, and by setting $\tau$ to be equal to GLS$(G,S)$, TBLS would indeed choose the right vertex.
\end{proof}

Although TBLS and GLS  cover the same set  of vertex orderings, we think that our TBLS formalism provides a simpler framework to analyze graph search algorithms, as can be seen in the next section.


\section{Recognition of some TBLS  search orderings}\label{tci}

Let us now  consider the following problem:

\vspace{0.5cm}

\textbf{Recognition of Search $\cal S$}

\KwData{Given a total ordering $\sigma$ of the vertices of a graph $G$  and a TBLS search $\cal S$,}

\KwResult{Does there exist $\tau$ such that $\sigma = \mbox{TBLS}(G, \prec_{\cal S}, \tau)$?}

\vspace{0.5cm}

Of course we can use Theorem~\ref{thm:fixpoint}
and build an algorithm that tests whether or not $\sigma=\mbox{TBLS}(G,\prec_{\cal S}, \sigma)$.  Let $\tau=\mbox{TBLS}(G,\prec_{\cal S},\sigma)$. 
If  $\tau=\sigma$ then the answer is yes; otherwise it is no.
We can certify the no answer using the first difference between $\tau$ and $\sigma$. Let $i$ be the first index such that $\sigma(i)\neq \tau(i)$. If TBLS chooses $\tau(i)$ and not $\sigma(i)$ at step $i$, then at this time $l(\sigma(i))\prec_{\cal S} l(\tau(i))$. So we can build a contradiction to the pattern-condition of this search.

But we may want to be able to answer this question without applying a TBLS search, or modifying a TBLS algorithm.
For example suppose that a distributed or parallel algorithm has been used to compute the ordering (for example when dealing with a huge graph 
\cite{BV12}) that is assumed to be a specific search ordering; how does one 
efficiently answer this question?  Let us study some cases.


\subsection{Generic Search}

For Generic Search consider Algorithm \ref{alg:gencertif} where $\sigma$ is the ordering we want to check, and for all $i$ between $1$ and $n$, $ln(\sigma(i))$ has
been computed; note that $G$ may be disconnected.  Recall that $ln(x)$ is the leftmost left neighbour of $x$; if $x$ has no left neighbours, then $ln(x) = -1$.  The algorithm will output either ``YES" or ``NO" depending on whether or not $\sigma$ is a GEN-ordering.

\vspace{0.5cm}

\begin{algorithm}[H] \label{alg:gencertif}
\caption{GEN-check}
$J\leftarrow 1$;  \hspace{4mm}\% \{ If $\sigma$ is a GEN-ordering, then $J$ is the index of the first \\ \hspace{1.5cm}vertex of the current connected component.\}\%\\
\For{$i\leftarrow 2$ \KwTo $n$}{
  \If{$ln(\sigma(i)) = -1$}{$J \leftarrow i$} \Else{{\If{$ln (\sigma(i)) < J$}{\Return{``NO''}}}
  }
  }
\Return{``YES''}

\end{algorithm}

\vspace{0.5cm}



\begin{theorem}\label{genericreco}
The GEN-check algorithm is correct and requires $O(n)$ time.  The recognition of a GEN-ordering can be implemented to run in $O(n+m)$  time.
\end{theorem}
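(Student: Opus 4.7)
The plan is to prove correctness of GEN-check by showing its accepting condition is equivalent to condition (3) of Theorem \ref{genericsearch}, and then to handle the complexity separately.

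First I would fix terminology: at the start of iteration $i$ of GEN-check, $J$ equals the largest index $k \leq i-1$ such that $\sigma(k)$ has no left neighbour (initially $J=1$, which is consistent because $\sigma(1)$ necessarily has $ln(\sigma(1))=-1$, and $J$ is then updated precisely at such positions). Intuitively, $J$ is the index where the ``current block'' of $\sigma$ begins, and condition (3) amounts to saying each such block is closed under the edge relation.

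Next I would prove both directions of the equivalence. For the contrapositive of soundness, suppose the algorithm outputs ``NO'' at step $i$; then $\sigma(i)$ has a left neighbour and $ln(\sigma(i)) < J$. Take $x$ to be the leftmost left neighbour of $\sigma(i)$ (so $\sigma^{-1}(x)=ln(\sigma(i))<J$) and $y=\sigma(J)$; by the definition of $J$, $ln(y)=-1$. Since $x$ is adjacent to $\sigma(i)$ we have $rn(x) \geq_\sigma \sigma(i) >_\sigma y$, giving $x <_\sigma y <_\sigma rn(x)$ with $ln(y)=-1$, a violation of condition (3). For completeness, suppose condition (3) fails, witnessed by $x <_\sigma y <_\sigma rn(x)$ with $ln(y)=-1$. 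Let $z=rn(x)$ and $i_z=\sigma^{-1}(z)$. At step $i_z$, since $y$ precedes $z$ and $ln(y)=-1$, $J \geq \sigma^{-1}(y)$. Moreover, $x$ is a left neighbour of $z$, so $ln(z) \leq \sigma^{-1}(x) < \sigma^{-1}(y) \leq J$, and since $ln(z) \neq -1$ the algorithm enters the else-branch and returns ``NO''. This establishes the characterization.

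For complexity, the main loop of GEN-check does $O(1)$ work per index and so runs in $O(n)$ once the array $ln(\cdot)$ is available. To obtain $ln(v)$ for every vertex, I would scan the adjacency list of each $v$ once and take the minimum of $\sigma^{-1}(u)$ over $u \in N(v)$ (returning $-1$ if $N(v) \cap \{\sigma(1),\dots,\sigma(\sigma^{-1}(v)-1)\}=\emptyset$). This preprocessing takes $O(n+m)$ time, giving the claimed $O(n+m)$ bound for recognition of a GEN-ordering. The only mildly subtle step in the argument is verifying that $J$ correctly represents the most recent ``block start''; I do not expect any real obstacle beyond the bookkeeping above.
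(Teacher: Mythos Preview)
Your proof is correct and follows essentially the same approach as the paper: both argue correctness by exhibiting, when the algorithm says ``NO'', a concrete witness against the characterisation of GEN-orderings in Theorem~\ref{genericsearch}, and conversely by showing any violation forces a ``NO'' at (or before) the step where the offending right endpoint is processed. The only cosmetic difference is that the paper phrases the witnesses via Condition~(2) (the forbidden triple $a<_\sigma b<_\sigma c$), whereas you use the equivalent Condition~(3) (the $ln/rn$ formulation); your preprocessing for $ln(\cdot)$ via a single min-scan is also a minor variant of the paper's adjacency-list sort, with the same $O(n+m)$ bound. One tiny point worth making explicit in your completeness direction: the claim ``at step $i_z$, $J\ge\sigma^{-1}(y)$'' presupposes the algorithm actually reaches step $i_z$, but if it returned ``NO'' earlier you are already done.
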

\begin{proof}

If the algorithm reports that $\sigma$ is not a GEN-ordering, then vertices $\sigma (ln(i)),$$ \sigma (J), \sigma (i)$ form a forbidden triple as stipulated in Condition 2 of Theorem \ref{genericsearch}.  Note that $\sigma (J)$ has no neighbours to its left in $\sigma$.

Now assume that the algorithm reports that $\sigma$ is a GEN-ordering but for sake of contradiction there exists a forbidden triple on vertices $a <_{\sigma} b <_{\sigma} c$.
Let $J$ be the rightmost $J$ index less than $\sigma^{-1} (c) $ identified by the algorithm;  note that $ b \le_{\sigma} \sigma (J)  <_{\sigma} c$ and $ \sigma (ln(c)) \le_{\sigma} a$.  When $i = \sigma^{-1} (c)$ the algorithm would have reported that $\sigma$ is not a GEN-ordering.

For the preprocessing we need to compute the values of $ln (x)$ for every vertex $x$, following Definition \ref{lnrndef}.  By sorting the adjacency lists with respect to $\sigma$ (in linear time), it is possible to find $ln (x)$ in linear time by scanning the adjacency lists once and storing $ln (x)$ in an array.  Given this information, Algorithm 3 runs in $O(n)$ time.  Including the preprocessing time, the whole complexity needed is $O(n+m)$.
\end{proof}

\subsection{BFS}

In order to handle the recognition of BFS-orderings and DFS-orderings, we will first prove  variations of the conditions proposed in Theorems \ref{th:bfs} and \ref{th:dfs}, which are easier to check.
Let us  define for every vertex $x$ in V, the following two intervals in $\sigma$: $Right(x)=[x,rn(x)]$ and $Left(x)=[ln(x), x]$. By convention, if $rn(x)=-1$ or $ln(x)=-1$ the corresponding interval is reduced to $[x]$.

\begin{theorem}\label{BFScond}
Vertex ordering $\sigma$ is a BFS-ordering of $V$ if and only if \begin{enumerate}
\item Vertex ordering $\sigma$ is a GEN-ordering of $G$
\item For every pair of vertices $x, y$, if $x<_{\sigma} y$ then $ln(x) \leq_{\sigma} ln(y)$
\item For every pair of vertices $x, y$, if $x \ne y$ then the intervals $Left(x)$ and $Left(y)$ cannot be strictly included \footnote{One is included
in the other and the two left extremities are different, as are the two right extremities.}.
\end{enumerate}
\end{theorem}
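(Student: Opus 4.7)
My plan is to use the pattern-condition characterization of BFS from Theorem \ref{th:bfs}(2) as the bridge between the TBLS definition of a BFS-ordering and the three local conditions proposed here. Condition~(1) will come for free from the hierarchy of searches (Theorem \ref{Xhier}); condition~(2) is a monotonicity property that I will extract by applying Theorem \ref{th:bfs}(2) to a carefully chosen triple; and condition~(3) is really just a geometric reformulation of~(2). The more substantive step is the backward direction, where I will invoke the GEN pattern-condition to guarantee the existence of a left neighbour of $b$ and then chase intervals using monotonicity.

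For the forward direction, assume $\sigma$ is a BFS-ordering. Condition~(1) follows because BFS extends Generic Search via Theorem \ref{Xhier}. For condition~(2), fix $x <_\sigma y$ and suppose for contradiction that $ln(y) <_\sigma ln(x)$. Then $a := ln(y)$ is a genuine vertex satisfying $a <_\sigma ln(x) <_\sigma x$, and the minimality of $ln(x)$ forces $a \notin N(x)$. The triple $(a,x,y)$ matches the hypothesis of Theorem \ref{th:bfs}(2), since $a \in N(y) - N(x)$, and therefore yields some $d \in N(x)$ with $d <_\sigma a$; but then $ln(x) \leq_\sigma d <_\sigma a$, contradicting $a <_\sigma ln(x)$. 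For condition~(3), I observe that strict inclusion $Left(x) \subsetneq Left(y)$ with distinct endpoints amounts exactly to $x <_\sigma y$ together with $ln(y) <_\sigma ln(x)$, which is ruled out by condition~(2) just proved.

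For the backward direction, I verify the pattern-condition of Theorem \ref{th:bfs}(2) directly. Let $a <_\sigma b <_\sigma c$ with $a \in N(c) - N(b)$. Applying Theorem \ref{genericsearch}(2) to this triple, using hypothesis~(1), yields some $d' \in N(b)$ with $d' <_\sigma b$, so in particular $ln(b)$ is a genuine vertex. By condition~(2) applied to $b <_\sigma c$, $ln(b) \leq_\sigma ln(c)$; and since $a$ is a left neighbour of $c$, $ln(c) \leq_\sigma a$. Chaining gives $ln(b) \leq_\sigma a$, and because $a \notin N(b)$ while $ln(b) \in N(b)$, the inequality is strict: $ln(b) <_\sigma a$. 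Thus $d := ln(b)$ is the witness required by Theorem \ref{th:bfs}(2), which completes the reverse implication.

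The only delicate point is the overlap between conditions~(2) and~(3): the forward direction makes clear that (3) is logically equivalent to (2), so one might wonder why both are listed. Condition~(3) is kept for presentational and algorithmic reasons—it translates directly into an interval-inclusion test convenient for the recognition algorithm built later—but it plays no role in the backward direction above. A minor subtlety is the $-1$ sentinel convention for vertices without left neighbours; this does not interfere with the argument, because the GEN hypothesis in the backward direction is precisely what forces $ln(b) \ne -1$ at the critical step, and in the forward direction the assumption $ln(y) <_\sigma ln(x)$ already makes $ln(y)$ a genuine vertex.
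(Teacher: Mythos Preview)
Your proof is correct and follows essentially the same route as the paper: both directions hinge on the pattern-condition of Theorem~\ref{th:bfs}(2), producing a bad triple $(ln(y),x,y)$ in the forward direction and exhibiting the witness $d=ln(b)$ via the GEN hypothesis plus monotonicity of $ln$ in the backward direction. The only cosmetic difference is that the paper first asserts (without details) that conditions~(2) and~(3) are equivalent and then argues through~(3), whereas you argue through~(2) directly and read~(3) off from it.
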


\begin{proof}
It is easy to show that Conditions 2 and 3 are equivalent.

$\Rightarrow$ First, notice that every BFS-ordering $\sigma$ is also a GEN-ordering. 
Now assume for contradiction that Condition 3 is contradicted, namely that $x <_{\sigma} y$ and  that $Left(y)$ strictly contains $Left(x)$.  Then we have the configuration:  
$ln(y)<_{\sigma}ln(x) \le_{\sigma} x <_{\sigma} y$.
Considering the triple $(ln(y), x, y)$, since $ln(y)<_{\sigma} ln(x)$, necessarily $x ln(y) \notin E$.  Using the BFS  4-points condition 
on this triple there exists $z$ such that $z <_{\sigma} ln(y)$ where $xz \in E$, thereby contradicting $ln(y) <_{\sigma} ln(x)$. 

$\Leftarrow$
Assume that $\sigma$ respects all three conditions of the theorem.
Consider a triple $(a, b, c)$ of vertices such that:  $a<_{\sigma} b <_{\sigma} c$ with $ac \in  E$ and $ab \notin E$.
Since $\sigma$ is a GEN-ordering, $ac \in E$ implies that $ln(b) \neq -1$ (i.e., $b$ has a left neighbour in $\sigma$).

Suppose $ln(b) >_{\sigma} a$. Since $ln($$c)\leq_{\sigma}a$, this implies  that $Left(c)$ strictly contains $Left(b)$, thereby contradicting 
Condition 3.   Therefore $b$ has a neighbour before $a$ in $\sigma$. So $\sigma$ follows the BFS  4-points condition and is a legitimate BFS-ordering.

\end{proof}

To determine whether a given vertex ordering $\sigma$ is a BFS-ordering we first use Algorithm \ref{alg:gencertif} to ensure that $\sigma$ is a
GEN-ordering.  We then use Algorithm \ref{alg:BFScertif} to determine whether or not Condition 3 of Theorem \ref{BFScond} is satisfied and thus
whether or not $\sigma$ is a BFS-ordering.  As with Algorithm  \ref{alg:BFScertif}, we assume that $ln(\sigma(i))$ has been computed 
for all $i$ between $1$ and $n$.

\vspace{0.5cm}

\begin{algorithm}[H] \label{alg:BFScertif}
\caption{BFS-check}

$min \leftarrow n$;   \hspace{4mm}\%\{$min$ will store the index of the current leftmost value of \\ \hspace{1.7cm}$ln (\sigma (j))$ for all $i \le j \le n$.\}\% \\
	\For{i$\leftarrow$ n  downto 1}{
		\If{ $ln(\sigma(i)) > min $}{{\Return{``NO''}}}
		\If{$ln(\sigma(i)) \neq  -1$}{$min \leftarrow ln(\sigma(i))$;}
	}
	\Return{``YES''}
\end{algorithm}

\vspace{0.5cm}

\begin{theorem}\label{BFSreco}
Given a GEN-ordering $\sigma$, the BFS-check algorithm correctly determines whether $\sigma$ is a BFS-ordering
in $O(n)$ time.  The recognition of a BFS-ordering can be done in $O(n+m)$ time.
\end{theorem}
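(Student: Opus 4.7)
The plan is to invoke Theorem~\ref{BFScond}. Since the input $\sigma$ is assumed to be a GEN-ordering, Condition~1 of that theorem is already satisfied; and since Conditions~2 and~3 were shown equivalent in the proof of Theorem~\ref{BFScond}, it suffices to verify Condition~2, which says that along $\sigma$ the map $i \mapsto ln(\sigma(i))$ is non-decreasing (with the convention that $ln(\sigma(i)) = -1$, i.e.\ a vertex with no left neighbour, imposes no constraint on vertices appearing earlier in $\sigma$, since such a vertex necessarily begins a new connected component of $G$).

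To argue correctness of BFS-check, I would establish the following reverse-induction loop invariant: immediately before the test at iteration $i$, the variable $min$ equals $\min\{\,ln(\sigma(j)) \mid i < j \le n,\ ln(\sigma(j)) \neq -1\,\}$, with the sentinel value $n$ standing in when that set is empty. The base case $i=n$ is immediate from the initialisation $min \gets n$; the inductive step follows because, whenever the algorithm does not return \emph{NO}, we have $ln(\sigma(i)) \le min$, so the assignment $min \gets ln(\sigma(i))$ (performed exactly when $ln(\sigma(i)) \neq -1$) correctly produces the new minimum. Granted the invariant, a \emph{NO} answer at index $i$ exhibits a $j>i$ with $ln(\sigma(j)) \neq -1$ and $ln(\sigma(j)) <_\sigma ln(\sigma(i))$, which is exactly a Condition~2 violation on $x=\sigma(i),\ y=\sigma(j)$. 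Conversely, if the algorithm terminates with \emph{YES}, then for every $i<j$ with $ln(\sigma(j))\neq -1$ we have $ln(\sigma(i)) \le ln(\sigma(j))$ (and the inequality is vacuous when $ln(\sigma(i))=-1$), so Condition~2 holds; by Theorem~\ref{BFScond}, $\sigma$ is a BFS-ordering.

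For the complexity claim: the BFS-check loop does constant work per iteration over $n$ indices, hence runs in $O(n)$. For the full recognition pipeline I would reuse the preprocessing described in the proof of Theorem~\ref{genericreco}, namely sorting all adjacency lists with respect to $\sigma$ in $O(n+m)$ (a single global bucket-sort) and reading off the arrays $ln(\cdot)$ (and, if needed, $rn(\cdot)$) in $O(n+m)$; then GEN-check runs in $O(n)$ by Theorem~\ref{genericreco}, and BFS-check runs in $O(n)$. Summing yields $O(n+m)$ overall.

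The only subtle point, and the place I would be careful in the write-up, is the treatment of the sentinel $-1$: one must observe that every defined $ln$-value lies in $\{1,\dots,n-1\}$ so that initialising $min\gets n$ never spuriously rejects at the first iteration, and that skipping the update of $min$ precisely when $ln(\sigma(i))=-1$ matches the semantics of Condition~2 that vertices starting a new component impose no constraint on their predecessors. Beyond this bookkeeping, no further difficulty arises.
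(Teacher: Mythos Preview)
Your proof is correct and takes a slightly different route from the paper's own argument. The paper argues directly against the BFS pattern condition of Theorem~\ref{th:bfs}: on a \textsc{no} answer it exhibits the forbidden triple $(\sigma(min),\sigma(i),\sigma(k))$, where $k$ is the iteration at which $min$ was last updated, and on a \textsc{yes} answer it shows that any hypothetical forbidden triple $a<_\sigma b<_\sigma c$ would have caused a \textsc{no} when $b$ is processed. You instead reduce to Condition~2 of Theorem~\ref{BFScond} and verify it via an explicit loop invariant for $min$. Your approach makes transparent what the algorithm is literally checking---monotonicity of $ln$ along $\sigma$---and is more careful about the $-1$ sentinel semantics, which the paper leaves implicit; the paper's approach ties the algorithm more directly to the certification theme of the section by producing a pattern-level witness, though your violating pair $(\sigma(i),\sigma(j))$ converts to the triple $(ln(\sigma(j)),\sigma(i),\sigma(j))$ immediately, so nothing is lost. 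The complexity arguments coincide.
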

\begin{proof}

If the algorithm reports that $\sigma$ is not a BFS-ordering, then consider the triple of vertices $\sigma(min), \sigma(i), \sigma(k)$, where $k$ is the 
value of $i$ when $min$ was determined.  Note that $\sigma(i)$ is not adjacent to $\sigma(min)$ or to any vertices to the left of  $\sigma(min)$ and
thus this triple forms 
 a forbidden triple as stipulated in Condition 2 of Theorem \ref{th:bfs}.  

Now assume that the algorithm reports that $\sigma$ is a BFS-ordering but for sake of contradiction there exists a forbidden triple on vertices $a <_{\sigma} b <_{\sigma} c$.  We let $a' =\sigma (ln ($$c))$ and note that since $b$ has no neighbours to the left of or equal to $a$, $b$ is not adjacent to $a'$ or to any vertices to
its left.

Thus when $i = \sigma^{-1} (c)$ the algorithm would have reported that $\sigma$ is not a BFS-ordering.
The complexity argument is the same as in the proof of Theorem \ref{genericreco}.
\end{proof}

Concerning this particular result on  BFS, when the graph is connected it provides as a corollary a linear time algorithm to certify a shortest path between the vertices $\sigma (1)$ and $\sigma(n)$. So in the spirit of \cite{McMNS11}, this can be used for certifying  BFS-based diameter algorithms (see \cite{BV12, CGHLM13}).

\vspace{0.5cm}

\noindent

\subsection{DFS}

We now consider DFS and  define 
$Lmax(x)$ for every vertex $x \in V$ to be the rightmost left neighbour of $x$ in $\sigma$; if $x$ has no left neighbours then by convention $Lmax(x)=-1$.
The interval $RLeft(x)$ is defined to be 
$[Lmax(x), x]$; again by convention, if  $Lmax(x)=-1$ $RLeft(x)$  is reduced to $[x]$.

\begin{theorem}
Let $G=(V,E)$ be a graph, and let $\sigma$  be an ordering of $V$.  Vertex ordering 
$\sigma$ is a DFS-ordering of $G$ if and only if \begin{enumerate}
\item $\sigma$ is a GEN-ordering of $G$
\item no two intervals $Right(x)$ and $RLeft(y)$, with $x \neq y$, strictly  overlap as intervals.
\end{enumerate}
\end{theorem}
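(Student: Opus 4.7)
The plan is to use the pattern-condition characterization of DFS (Theorem~\ref{th:dfs}(2)) as the pivot and translate a strict overlap $Lmax(y) <_\sigma x <_\sigma y <_\sigma rn(x)$ (with $x \neq y$) into a forbidden DFS triple, and conversely. Condition (1) of the forward direction will follow directly from Theorem~\ref{Xhier} since DFS extends Generic Search, so the real work concerns condition (2) and its converse.

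For the forward direction of (2) I shall argue the contrapositive. Assuming $Right(x)$ and $RLeft(y)$ strictly overlap, I exhibit the forbidden triple $(x, y, rn(x))$: by definition $x \in N(rn(x))$, and $x \notin N(y)$, because otherwise $x$ would be a left neighbour of $y$ positioned strictly after $Lmax(y)$, contradicting the maximality of $Lmax(y)$. Thus $x \in N(rn(x)) - N(y)$ with $x <_\sigma y <_\sigma rn(x)$, and Theorem~\ref{th:dfs}(2) would require some $d \in N(y)$ with $x <_\sigma d <_\sigma y$ -- impossible, since every left neighbour of $y$ lies $\leq_\sigma Lmax(y) <_\sigma x$.

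For the backward direction I again argue contrapositively: a DFS-violating triple $a <_\sigma b <_\sigma c$ (with $a \in N(c) - N(b)$ and no neighbour of $b$ in the open interval $(a, b)$) should produce the strict overlap. The crucial first step is to invoke hypothesis (1) via Theorem~\ref{genericsearch}(3) on the pair $(a, b)$ -- valid because $a <_\sigma b <_\sigma c \leq_\sigma rn(a)$ -- to conclude that $b$ has a left neighbour; hence $Lmax(b)$ is defined. Since $Lmax(b) \in N(b)$ while $a \notin N(b)$, we obtain $Lmax(b) <_\sigma a$. Taking $x = a$ and $y = b$ then yields the four-point chain $Lmax(y) = Lmax(b) <_\sigma a = x <_\sigma b = y <_\sigma c \leq_\sigma rn(a) = rn(x)$, which is precisely a strict overlap of $Right(a)$ with $RLeft(b)$. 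The main obstacle is exactly this need for hypothesis (1): without the GEN assumption, $Lmax(b)$ could be undefined and the entire construction collapses; once $Lmax(b)$ is secured, the correspondence between the overlap configuration and the DFS pattern is purely syntactic.
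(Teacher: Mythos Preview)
Your argument is correct and follows essentially the same route as the paper's own proof: both directions are proved by contraposition, the forward direction exhibits the forbidden DFS triple $(x,y,rn(x))$ from the overlap configuration $Lmax(y)<_\sigma x<_\sigma y<_\sigma rn(x)$, and the backward direction uses the GEN hypothesis to secure a left neighbour of $b$, then observes $Lmax(b)<_\sigma a$ so that $Right(a)$ and $RLeft(b)$ strictly overlap. Your write-up is in fact slightly more explicit than the paper's in citing Theorem~\ref{Xhier} and Theorem~\ref{genericsearch}(3) at the two places where the GEN property is invoked, but the underlying argument is identical.
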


\begin{proof}
$\Rightarrow$ First, notice that every DFS-ordering is also a GEN-ordering. Then, assume, for contradiction, that $\sigma$ is a DFS-ordering of $G$, but that  in $\sigma$ $Right(x)$ and $RLeft(y)$  overlap for some $x \neq y$.  Necessarily $x <_{\sigma} y$
and $Lmax(y)<_{\sigma} x <_{\sigma} y <_{\sigma} rn(x)$.
$Lmax(y)<_{\sigma} x $ implies $xy \notin E(G)$.
 But then the triple $(x, y, rn(x))$ violates the 4-points condition
of $\sigma$, since $y$ has no neighbour  between $x$ and $y$ in $\sigma$.

$\Leftarrow$
Assume that $\sigma$ respects both conditions of the theorem but $\sigma$ is not a DFS-ordering.
Consider a triple $(a, b, c)$ of vertices such that:  $a<_{\sigma} b <_{\sigma} c$ with $ac \in  E$ and $ab \notin E$ but there is no neighbour
of $b$ 
 between $a$ and $b$ in $\sigma$.
Since $\sigma$ is supposed to  be a GEN-ordering, $ac \in E$ implies that $b$ has a neighbour $d$ left to it in $\sigma$, which by the 
above argument, must be before $a$.
Thus $Lmax(b) <_{\sigma} a$ and therefore  the intervals $RLeft(b), Right(a)$ strictly overlap, a contradiction.

\end{proof}

\begin{corollary}
DFS-orderings can be recognized  in $O(n+m)$.
\end{corollary}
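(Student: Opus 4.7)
My plan is to invoke the characterization of DFS-orderings from the preceding theorem and verify each of its two conditions in $O(n+m)$ time. The first step will be a shared preprocessing pass, analogous to what is done in Theorems \ref{genericreco} and \ref{BFSreco}: by bucket-sorting the adjacency lists according to $\sigma$, I can read off $ln(x)$, $rn(x)$ and $Lmax(x)$ for every vertex $x$ in $O(n+m)$ total time, and thus obtain the two endpoints of $Right(x)$ and $RLeft(x)$. Condition (1) of the characterization is then handled in $O(n)$ by directly applying Algorithm \ref{alg:gencertif} (the GEN-check procedure), so nothing new is required there.

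The substance lies in verifying condition (2): that no $Right(x)$ and $RLeft(y)$ with $x\neq y$ strictly overlap. Since there are at most $2n$ intervals and therefore $O(n)$ endpoints, my intended approach is a single left-to-right sweep, maintaining a stack of currently open intervals. At each position $i$, I process close events first (the $Right$ intervals whose $rn$-position is $i$, together with the $RLeft$ interval of $\sigma(i)$ when $Lmax(\sigma(i))\neq -1$) and then the open events (the $Right$ interval of $\sigma(i)$ if $rn(\sigma(i))\neq -1$, and the $RLeft$ intervals whose $Lmax$-position is $i$). A strict overlap between two intervals manifests exactly as a close event that fails to match the top of the stack, by the standard parenthesis-balancing diagnosis of interval crossings. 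Because each event triggers $O(1)$ work, the sweep runs in $O(n)$.

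The step I expect to require the most care is reconciling this generic ``any crossing'' detection with the specific $Right$/$RLeft$ crossing forbidden by the theorem. The cleanest fix is to tag each stack entry with its interval type and declare a violation only when the mismatched pair is of mixed type; alternatively, a short lemma shows that in a candidate ordering satisfying condition (1), no $Right$--$Right$ or $RLeft$--$RLeft$ crossing can occur in isolation, so every detected mismatch witnesses a genuine $Right$--$RLeft$ strict overlap. With this point settled, combining the $O(n)$ sweep with the $O(n+m)$ preprocessing establishes the claimed $O(n+m)$ bound, completing the corollary.
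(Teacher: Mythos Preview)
Your approach is essentially the paper's own: check condition (1) with the GEN-check of Theorem~\ref{genericreco} after the same $O(n+m)$ preprocessing, and test condition (2) by a linear sweep over the $2n$ interval endpoints using a stack (what the paper calls ``a simple 2 states stack automaton''). You spell out the sweep in more detail and, going beyond the paper, explicitly flag the question of whether detecting \emph{any} crossing suffices or whether one must isolate the mixed $Right$/$RLeft$ case; the paper does not discuss this point at all. One caution on your tagging fix: simply ``declaring a violation only when the mismatched pair is of mixed type'' is not quite enough, since after a same-type mismatch the stack is no longer consistent and later mixed crossings may be missed; a cleaner route is to verify directly, for each $y$, whether some $x$ with $Lmax(y)<_\sigma x<_\sigma y$ has $rn(x)>_\sigma y$, which is a single pass with a running maximum and sidesteps the issue.
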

\begin{proof}
Verifying that $\sigma$ is a generic-ordering can be done in $O(n+m)$ time using Theorem \ref{genericreco}.
To check the second condition, it suffices to build the  family of  $2n$ intervals and apply a simple 2 states stack automaton \cite{HMU01} to check the overlapping in $O(n)$ time.

\end{proof}

\subsection{LBFS  and LDFS}

\begin{theorem}\label{cert}
\LexBFS and \LexDFS-orderings can be recognized in $O(n(n+m))$ time.
\end{theorem}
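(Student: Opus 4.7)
The plan is to invoke Theorem~\ref{thm:fixpoint}: $\sigma$ is a \LexBFS-ordering (respectively \LexDFS-ordering) if and only if $\sigma=\mbox{TBLS}(G,\prelbfs,\sigma)$ (resp.\ $\sigma=\mbox{TBLS}(G,\preldfs,\sigma)$). Thus the recognition problem reduces to simulating the TBLS algorithm with the tie-break $\tau=\sigma$ and checking at each iteration $i$ that $\sigma(i)$ is chosen. Since unnumbered vertices at step $i$ are exactly $\sigma(i),\sigma(i+1),\dots,\sigma(n)$, and their leftmost element in $\tau=\sigma$ is $\sigma(i)$, we only need to verify that $\sigma(i)$ is Eligible, i.e.\ that no unnumbered $x$ satisfies $l(x)\succ_S l(\sigma(i))$ under the appropriate partial order.

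I would maintain each label $l(v)$ as a sorted list of visiting dates. At iteration $i$, I would (a) scan all unnumbered vertices $x$ and compare $l(\sigma(i))$ with $l(x)$ by a linear merge, computing $umin(l(\sigma(i))\triangle l(x))$ for \LexBFS or $umax(l(\sigma(i))\triangle l(x))$ for \LexDFS, and reject as soon as the test shows that $l(x) \succ_S l(\sigma(i))$; and (b) append $i$ to $l(w)$ for every unnumbered neighbour $w$ of $\sigma(i)$. If every iteration accepts, output \textsc{Yes}; otherwise the first failure supplies an explicit witness $x$ with $l(x)\succ_S l(\sigma(i))$, which by Property~\ref{metaordering} translates into a violation of the pattern condition (3) of Theorem~\ref{th:lbfs} or Theorem~\ref{th:ldfs}, so correctness follows from Theorem~\ref{thm:fixpoint} together with these characterizations.

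For the complexity, the update step (b) costs $O(\deg(\sigma(i)))$, and sums to $O(m)$ overall. A single comparison costs $O(|l(\sigma(i))|+|l(x)|)$. The key observation is that $|l(\sigma(i))|\le \deg(\sigma(i))$, and that at any moment $\sum_{x\text{ unnumbered}}|l(x)|$ equals the number of edges between visited and unvisited vertices, hence is at most $m$. Therefore the comparison work at step $i$ is bounded by $O(n\deg(\sigma(i))+m)$, which sums over all $i$ to $O\bigl(n\sum_i \deg(\sigma(i)) + nm\bigr)=O(nm)$. Adding the $O(n^{2})$ term for merely enumerating unnumbered vertices over the $n$ iterations yields the announced $O(n^{2}+nm)=O(n(n+m))$ bound.

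The main (and in fact only) subtle point is the amortized accounting: the naive bound $|l(v)|\le n$ only gives $O(n^{3})$, which is weaker than $O(n(n+m))$ for sparse graphs. Exploiting that labels are subsets of \emph{visited neighbours} rather than of all visited vertices, so that their total weight is $O(m)$ at any time, is what tightens the analysis to the claimed bound.
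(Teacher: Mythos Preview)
Your argument is correct and achieves the stated bound, but it follows a different route from the paper's proof. The paper does \emph{not} simulate TBLS via Theorem~\ref{thm:fixpoint}; instead it checks the pattern condition~(3) of Theorems~\ref{th:lbfs} and~\ref{th:ldfs} directly: for every ordered pair $(b,c)$ with $b<_\sigma c$ it scans $N(b)$ and $N(c)$ to locate the leftmost (for \LexBFS) or rightmost (for \LexDFS) element of $N(b)\bigtriangleup N(c)$ before $b$ and tests whether it lies in $N(b)\setminus N(c)$. Each such test costs $O(|N(b)|+|N(c)|)$, and summing over all pairs gives $\sum_b\sum_c(|N(b)|+|N(c)|)=O(n(n+m))$; the results are stored in an $O(n^2)$ table whose entries serve as certificates.

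The comparison is interesting because the paper explicitly mentions your fixpoint-based approach at the start of Section~\ref{tci} and then sets it aside, stating a preference for methods that do not rerun (or modify) the search itself. What your approach buys is simplicity: a single pass with one witness on failure, and an amortized analysis that is arguably cleaner than the double sum. What the paper's approach buys is a certificate that is purely structural---a table indexed by vertex pairs, independent of any execution trace---which aligns with its emphasis on pattern conditions as memory-efficient, search-free certificates. Both are valid proofs of the theorem; they simply optimise for different notions of ``recognition''.
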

\begin{proof} 

	To build the recognition algorithm we  use the third condition of the  relevant theorems in Section \ref{charac}, in particular
 \ref{th:lbfs} (LBFS) and \ref{th:ldfs} (LDFS).  Both of these conditions are pattern-conditions.
The certificate is stored in a table whose entries are keyed by the pair $(b,c)$ where $b <_{\sigma} c$ 
and the information will either be the vertex $a$, where $a  <_{\sigma} b$  that satisfies the corresponding condition or an error message
indicating that the condition has been violated. 
For LBFS and LDFS, the pattern-condition examines $a$, the leftmost (LBFS) or the rightmost (LDFS) vertex of $N(b) \bigtriangleup N(c) $ and requires that $a \in N(b)-N(c)$.  It is easy to show 
that this can be accomplished in time $O(|N(b)| + |N(c)|$, for any $b$ and $c$.
In all cases, if $a$ satisfies the membership condition then it is stored in the $(b,c)$'th entry of the table; otherwise ``error'' is stored.
	
Regarding complexity considerations, the table uses O($n^2$) space complexity.
For the lexicographic searches,
the timing requirement is bounded by $\sum_{b \in V} \sum_{c \in V} (|N(b)| + |N(c)|)$ to build the table and $O(n^2)$ time to
search for an ``error'' entry, giving an $O(n(n+m))$ time complexity.
 \end{proof}

These results for LBFS and LDFS do not seem to be optimal, but at least they yield a certificate in case of failure. To improve these algorithms we need to find some new characterizations of LBFS- and LDFS-orderings.

\section{Implementation  issues}\label{Imp}

We now  consider how to compute an TBLS search, in the case where $\prec$ is a total order. In such a case, at each step of the search, the labels yield a total preorder on the vertices. Such a  total preorder (also called weak-order using ordered sets  terminology) can be efficiently represented using ordered partitions as can beset in the next result.

\begin{theorem}\label{thpartoche}
$TBLS(G,\prec,\tau)$  where $\prec$ is a total order can be implemented to run
in $O(n+ mT(n)\log n)$ time where the  $\prec$ comparison time between two labels is bounded by $O(T(n))$.
\end{theorem}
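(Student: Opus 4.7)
The plan is to represent, at the beginning of each step $i$, the total preorder induced by the current labels on the still-unnumbered vertices by an \emph{ordered partition} $\mathcal{Q}=(C_1,\dots,C_k)$, where the vertices inside a class $C_j$ all carry a common label $L_j$, the classes are listed so that $L_1$ is the $\prec$-maximum of $\{L_1,\dots,L_k\}$, and the elements of each class are stored in $\tau$-order inside a doubly-linked list. Each vertex carries a back-pointer to the class containing it. Initially, $\mathcal{Q}$ consists of a single class holding all $n$ vertices in $\tau$-order, which is prepared in $O(n)$ time.

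At step $i$, the algorithm reads $v=\sigma(i)$ as the head of $C_1$ and removes it. It then scans the adjacency list of $v$ and uses the back-pointers to group the unnumbered neighbours of $v$ by their current class. For every class $C$ touched this way, it performs the standard partition-refinement split: the neighbours of $v$ lying in $C$ are moved to a freshly allocated class $C'$ whose label becomes $L\cup\{i\}$, while the non-neighbours remain in a class $C''$ with the old label $L$. The class $C''$ simply takes the position that $C$ occupied, so the only real work is to insert $C'$ at the correct place in the ordered list.

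The key combinatorial remark is that the integer $i$ does not appear in any label before step $i$, so the labels $L\cup\{i\}$ produced by the various splits of step $i$ are pairwise distinct and also distinct from every label already present in $\mathcal{Q}$. Therefore no two classes ever have to be merged and the insertion position of each new class is unambiguous. Since $\prec$ is total, any two distinct labels are comparable, and the correct position of $C'$ in the current ordered list can be found by a binary search driven by $\prec$-comparisons. This uses $O(\log n)$ comparisons per insertion, each costing $O(T(n))$, so one insertion costs $O(T(n)\log n)$. Performing the splits one at a time against the current state of $\mathcal{Q}$ preserves the invariant that $\mathcal{Q}$ stays $\prec$-sorted.

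For the final bound, the initialization is $O(n)$. At step $i$, locating the relevant classes and moving vertices around costs $O(\deg(\sigma(i)))$, summing to $O(m)$ over all steps. The total number of freshly created classes is bounded by $\sum_{v} \deg(v) = 2m$, and each creation triggers one binary-search insertion of cost $O(T(n)\log n)$. Altogether the running time is $O(n+m+m\,T(n)\log n)=O(n+m\,T(n)\log n)$. The step I expect to be the main obstacle is precisely the disjointness argument of the third paragraph: without the observation that every label manufactured at step $i$ is brand-new, one could not rule out that inserting $C'$ would require a merge with some already existing class, which would break the binary-search approach and force a costlier data-structural workaround.
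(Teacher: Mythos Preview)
Your overall strategy matches the paper's: partition refinement on the unnumbered vertices, one class per distinct current label, each class internally ordered by $\tau$, together with a priority structure over the classes keyed by their labels. The disjointness observation about $L\cup\{i\}$ being brand-new is exactly the paper's Invariant~1 (distinct labels across parts), so that part is fine.

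There is, however, a genuine data-structural gap. You insert each new class $C'$ into the ordered list of classes via ``binary search'' at a claimed cost of $O(\log n)$ comparisons. But binary search is incompatible with a list that also supports cheap insertion: on a linked list you cannot reach the midpoint in $O(1)$ time, and on an array the search is fast but the actual insertion shifts up to $\Theta(n)$ entries. Either reading defeats the $O(T(n)\log n)$ per-insertion bound, and with it the final $O(n+mT(n)\log n)$ running time. The paper sidesteps this by storing the classes in a \emph{max-heap} keyed on labels: finding the eligible class is immediate, and each of the $O(m)$ insertions and deletions costs $O(\log n)$ label comparisons of cost $T(n)$ each. Replacing your ``ordered list plus binary search'' by a heap (or, equivalently, a balanced search tree) repairs the argument with no change to the rest of your accounting.

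One smaller point the paper makes explicit and you do not: preserving $\tau$-order inside each class during a split in $O(|N(v)|)$ time requires scanning $N(v)$ in $\tau$-order, which in turn needs the adjacency lists presorted with respect to $\tau$ as a preprocessing step.
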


\begin{proof}
We use the  framework of partition refinement \cite{HMPV00}. First we sort the adjacency lists with respect to $\tau$, and 
consider  the following algorithm.   
The input to the algorithm is a graph $G=(V,E)$, a total  order $\prec$ on $\PNF$,  and an ordering $\tau$ of $V$
and the output is the $\mbox{TBLS}(G,\prec,\tau)$-ordering $\sigma$ of $V$.

\vspace{0.5cm}

{\small
\begin{algorithm}[H] 
\caption{Computing a TBLS ordering}

Let $\mP$ be the  partition $\{V\}$, where the only  part (i.e., $V$) is ordered  with respect to $\tau$\;
\For{$i\leftarrow 1$ \KwTo $n$}
{
  Let $Eligible$ be the part of $\mP$ with the largest label with respect to $\prec$ and\; 
  let $x$ be its first vertex\;
  replace $Eligible$ by  $Eligible-\{x\}$ in $\mP$ \;
  $\sigma(i)\gets x$\;
  Refine($\mP$, $N(x)$)\;
}
\end{algorithm}
}

The algorithm maintains an (unordered) partition $\mP$ of the unnumbered vertices. Each part of $\mP$ is an ordered list of vertices. The following two invariants hold throughout the execution of the algorithm:

\begin{enumerate}
\item The vertices of each part have the same unique (with respect to parts) label;
\item Inside a part, the vertices are sorted with respect to $\tau$.
 \end{enumerate}

The action of Refine$(\mP,A)$ is to replace each part  $P\in\mP$ with two new parts: $P\cap A$ and $P-A$ (ignoring empty new parts).  It is possible to maintain the two invariants using the data structure from \cite{HMPV00}, provided the adjacency lists of $G$ are sorted with respect to $\tau$. After each refinement, each part of $\mP$ therefore contains vertices that are twins with respect to the visited vertices (Invariant 1). Thanks to the second invariant, the chosen vertex is always the first vertex (with respect to $\tau$) of part $Eligible$;  i.e., $\sigma(i)$ is indeed $x$. 

\vspace{0.5cm}

For the time complexity, Refine($\mP$, $N(x)$) takes $O(|N(x)|)$ time \cite{HMPV00}, so all refinements take $O(n+m)$ time. The only non-linear step is identifying part $Eligible$ among all parts of the partition. Each part has a label (the one shared with all its vertices) used as a key in a Max-Heap. Refine($\mP$, $N(x)$) creates at most $|N(x)|$ new parts so there are at most $m$ insertions into the heap. The label of a part does not change over time (but empty parts must be removed). There are no more removal operations than insertion operations, each consisting of at most $\log n$ label comparisons (since there are at most $n$ parts at any time). So we get the  $O(n+mT(n) \log n)$ time bound. 

\end{proof}

This complexity is not optimal, since
it is well-known and already used in some applications (see for example \cite{DOS09}) that classical searches such as BFS, DFS, LBFS  can be implemented  within the TBLS framework, i.e., solving the tie-break with a given total ordering $\tau$ of the vertices, within the same complexity as their current best implementations. To avoid the $T(n)$ costs and the $\log n$ factor, the trick  is simply to use an implementation of the search that uses partition refinement (such an implementation exists for BFS, DFS, and LBFS). If we start with a set ordered via $\tau$, there exists a partition refinement implementation that preserves this ordering on each part of the partition, and the tie-break rule means simply choose the first element of the Eligible part. For LDFS, that best known complexity can also be achieved this way. But for Gen-search, MCS and MNS we do  not know yet how to achieve linear time, within the TBLS framework.

\section{Concluding remarks}

We have focused our study on a new formalism that captures many usual graph searches as well as the commonly used
multi-sweep application of these searches.   The TBLS formalism  allows us to define a generic TBLS-orderings recognition algorithm, and gives us a new point of view of the hierarchy amongst graph searches. The new pattern-conditions for Generic Search, BFS and DFS give us a better way (compared to the pattern-conditions presented in \cite{CK}) of certifying whether a given ordering could have been produced by such a search. Furthermore, for LBFS and LDFS we do not have to trust the implementation of the search (which can be complicated) but have presented a simple program that just visits the neighbourhood of the vertices of the graph and stores a small amount of information (see Theorem \ref{cert}). The size of this extra information, however, can be bigger than the size of the input, and it may take longer to compute than the actual time needed to perform the search itself.

\vspace{0.3cm}

The landscape of graph search is quite complex. Graph searches can be clustered using the data structures involved in their best implementations (queue, stack, partition refinement \dots). In this paper we have tried a more formal way to classify graph searches. This attempt yields an algebraic framework that could be of some interest.

Clearly being an extension (see section \ref{sectTBLS})
is a transitive relation. In fact $\ll$ structures the TBLS graph searches as $\wedge$-semilattice.
The $0$ search in this semi-lattice, denoted by the null search or $S_{null}$, corresponds to the empty ordering relation (no comparable pairs).
At every step of $S_{null}$ the Eligible set contains all unnumbered vertices.
Therefore  for every $\tau$, $\mbox{TBLS}(G, \prec_{S_{null}}, \tau)=\tau$ and so any total ordering of the vertices can be produced by $S_{null}$.

The infimum between two searches $S,S'$ can be defined as follows:

\medskip
\noindent
For every pair of label sets $A, B$, we define: $A\prec_{S\wedge S'}B$   if and only if $A \prec_S B$ and $A \prec_{S'}B$.
\medskip

Clearly every extension of $S$ and $S'$ is an extension of $S\wedge S'$. Similarly $S$ and $S'$ are extensions of $S\wedge S'$.

While being as general as GLS, we feel that TBLS is closer to the pattern-conditions presented in \cite{CK}, since many of the $\prec$ conditions
presented in this paper are a rewriting of their pattern-conditions.
Still, there are many variants of the searches we studied that do not fall under the TBLS model, such as layered search. 
 We wonder if a more general search model can be found, that would not only include some of these other common searches but would also retain the simplicity of TBLS. 
  
 \section*{Acknowledgements:} The authors thank Dominique Fortin for his careful reading and numerous suggestions to improve the paper. 
The first author wishes to thank the Natural Sciences and Engineering Research Council of Canada for their financial support.


\end{document}